\newtheorem{theorem}{\textbf{Theorem}}
\def\hlinew#1{%
    \noalign{\ifnum0=`}\fi\hrule \@height #1 \futurelet
\reserved@a\@xhline}
\def\nameScheme{ESAFL\xspace}
\def\nameHE{ESHE\xspace}
\def\nameCompOne{BatchCrypt\xspace}
\def\nameCompTwo{FATE\xspace}
\begin{document}
%
% paper title
% Titles are generally capitalized except for words such as a, an, and, as,
% at, but, by, for, in, nor, of, on, or, the, to and up, which are usually
% not capitalized unless they are the first or last word of the title.
% Linebreaks \\ can be used within to get better formatting as desired.
% Do not put math or special symbols in the title.
% \title{\nameScheme: Efficient Secure Multiparty Homomorphic Encryption for Cross-Silo Federated Learning}
%\title{\nameScheme: Efficient Secure Multiparty Partially Homomorphic Encryption for Cross-Silo Federated Learning}
%\title{\nameScheme: Efficient Secure Homomorphic Encryption for Cross-Silo Federated Learning}
\title{\nameScheme: Efficient Secure Additively Homomorphic Encryption for Cross-Silo Federated Learning}
%\title{\nameScheme: Novel Additively Homomorphic Encryption for Cross-Silo Federated Learning}

\author{Jiahui~Wu, Weizhe~Zhang, Fucai~Luo% <-this % stops a space
\IEEEcompsocitemizethanks{\IEEEcompsocthanksitem
This work is supported by The Key Program of the Joint Fund of the National Natural Science Foundation of China (Grant No. U22A2036)
and The Key-Area Research and Development Program of Guangdong Province (Grant No. 2020B010136001).
\IEEEcompsocthanksitem Jiahui Wu, Weizhe Zhang, and Fucai Luo are with the Department of New Networks, Peng Cheng Laboratory, Shenzhen 518000, China
(e-mail: wjh01@pcl.ac.cn; weizhe.zhang@pcl.ac.cn; lfucai@126.com);
Corresponding author: Weizhe Zhang.
%
%\IEEEcompsocthanksitem W. Zhang is with School of Computer Science and Technology, Harbin Institute of Technology, Harbin 150001, China, and also with the Department of New Networks, Peng Cheng Laboratory, Shenzhen 518000, China
%(e-mail: wzzhang@hit.edu.cn/weizhe.zhang@pcl.ac.cn).

}% <-this % stops an unwanted space
%\thanks{Manuscript received December 15, 2018.}
}

\IEEEtitleabstractindextext{%
\begin{abstract}
  Cross-silo federated learning (FL) enables multiple clients to collaboratively train a machine learning model without sharing training data, but privacy in FL remains a major challenge.
  Techniques using homomorphic encryption (HE) have been designed to solve this but bring their own challenges.
  Many techniques using single-key HE (SKHE) require clients to fully trust each other to prevent privacy disclosure between clients. However, fully trusted clients are hard to ensure in practice.
  Other techniques using multi-key HE (MKHE) aim to protect privacy from untrusted clients but lead to the disclosure of training results in public channels by untrusted third parties, e.g., the public cloud server.
  Besides, MKHE has higher computation and communication complexity compared with SKHE.
  We present a new FL protocol \nameScheme that leverages a novel efficient and secure additively HE (\nameHE) based on the hard problem of ring learning with errors.
  \nameScheme can ensure the security of training data between untrusted clients and protect the training results against
  untrusted third parties. %, and prevent privacy disclosure over multiple training rounds.
  In addition, theoretical analyses present that \nameScheme outperforms current techniques using MKHE in computation and communication, and intensive experiments show that \nameScheme achieves approximate $204\times$-$953\times$ and $11\times$-$14\times$ training speedup while reducing the communication burden by $77\times$-$109\times$ and $1.25\times$-$2\times$ compared with the state-of-the-art FL models using SKHE.
\end{abstract}

% Note that keywords are not normally used for peerreview papers.
\begin{IEEEkeywords}
Federated learning, privacy protection, homomorphic encryption, secure addition computation. %multiparty secure computation.
\end{IEEEkeywords}}

% make the title area
\maketitle

% To allow for easy dual compilation without having to reenter the
% abstract/keywords data, the \IEEEtitleabstractindextext text will
% not be used in maketitle, but will appear (i.e., to be "transported")
% here as \IEEEdisplaynontitleabstractindextext when the compsoc
% or transmag modes are not selected <OR> if conference mode is selected
% - because all conference papers position the abstract like regular
% papers do.
\IEEEdisplaynontitleabstractindextext
% \IEEEdisplaynontitleabstractindextext has no effect when using
% compsoc or transmag under a non-conference mode.

% For peer review papers, you can put extra information on the cover
% page as needed:
% \ifCLASSOPTIONpeerreview
% \begin{center} \bfseries EDICS Category: 3-BBND \end{center}
% \fi
%
% For peerreview papers, this IEEEtran command inserts a page break and
% creates the second title. It will be ignored for other modes.
\IEEEpeerreviewmaketitle

% \input{sections/0-abstract}
%\section{Introduction}\label{sec:introduction}
\IEEEraisesectionheading{\section{Introduction}\label{sec:introduction}}
Massive amounts of training datasets are required to build high-quality machine learning (ML) models.
In some areas like health care or finance, managing large datasets is often difficult and costly, not to mention that these datasets contain a large number of sensitive data (e.g., financial records, biometric health records, location information, shopping record) and cannot be disclosed \cite{2019FedMD}.
Cross-silo federated learning (FL) \cite{mcmahan2021advances} rises to this challenge.
It enables a group of distributed clients (such as medical and health institutions) and a central server to collaboratively train a model while keeping the clients' sensitive data local.
Specifically, in an iteration of FL, each client submits its local gradients to the server for model update.
The server aggregates the local gradients of all clients into global gradients and sends the global gradients back to the clients.
The clients then update their local model.
After numerous iterations, each client finally obtains a trained model.

To ensure that no client's update is revealed from its submitted gradients during data transfer and aggregation, many privacy-preserving techniques have been designed \cite{su2021secure,bonawitz2017practical,ben2022scionfl, mohamad2023sok,zheng2022aggregation,song2022eppda,shokri2015privacy,stevens2022efficient, DeepLearningHE, Zhang2020BatchCrypt,zhang2021towards,tian2021secure,chen2019efficient,ma2022privacy, Aloufi2020Blindfolded}. %
Among them, homomorphic encryption (HE) is particularly attractive to cross-silo FL, as it provides stronger privacy guarantees with no loss of learning accuracy.
In addition, compared with other privacy-preserving techniques, HE can greatly reduce the number of communication rounds in FL training, and it can be easily plugged into the existing FL frameworks to augment the popular parameter server architecture.
Traditional HE algorithms with single keys (SKHE) like Paillier and CKKS have been adopted in many FL applications \cite{DeepLearningHE, Zhang2020BatchCrypt,zhang2021towards}.
In these applications, all clients have the same encryption and decryption keys.
The clients encrypt their local gradients using the same encryption key and submit the encrypted ones to the cloud server. %in a public channel.
As all clients' gradients are encrypted, no information can be learned by the cloud server or external adversaries.
However, since the clients have the same decryption key, the ciphertexts of any client can be decrypted by a semi-honest client.
Therefore, FL applications using SKHE can be secure when all clients fully trust each other, but this is hard to ensure in practice.
To protect privacy against semi-honest clients, multi-key HE algorithms (MKHE) \cite{tian2021secure,chen2019efficient,ma2022privacy, Aloufi2020Blindfolded} are proposed and applied to FL, where clients encrypt their local gradients with different encryption keys.

Although federated learning based on MKHE is promising, two \textit{issues} still need to be well addressed.
\textbf{MKHE brings a new privacy threat to FL, namely that it cannot guarantee the security of the training results in the public channels against passive eavesdroppers \cite{lee2019security}.}
In MKHE, a distributed decryption protocol is required to decrypt a multi-key aggregation ciphertext.
The distributed decryption protocol consists of two steps:
\textit{Partial decryption}: Each client decrypts an aggregation ciphertext partially with its secret key and then outputs a partial decryption result;
\textit{Final decryption:} Each client operates all the clients' partial decryption results and obtains the aggregation plaintext.
In the protocol, the final decryption step takes only partial decryption results without any key.
In other words, the partial decryption results are not ciphertexts anymore.
Thus, any passive eavesdropping adversary can obtain the aggregation plaintext for free by merely getting the partial decryption results from the public channel and then implementing the final decryption step.
A naive solution is to use secure pair-wise communication channels to transmit partial decryption results between clients,
but as the number of clients and FL training rounds increases, the cost of establishing and maintaining secure channels may become very high.
\textbf{Second, MKHE owns low computational and communication efficiency.}
Most of the existing HE algorithms perform complex cryptographic operations such as modular multiplication and exponentiation which are extremely expensive to compute.
Usually, their encryption/decryption takes over $80\%$ of the training time, and encryption greatly expands the size of ciphertexts, which increases the amount of transmission data by more than $150\times$ compared with pure plaintext training \cite{Zhang2020BatchCrypt}.
Moreover, in MKHE, since the local gradient updates are encrypted with different keys, in order to generate correct aggregation gradient updates, the existing MKHE algorithms require additional interaction among clients in each model iteration. This brings extra significant computational and communication overhead to clients.

\textbf{Our Approach and Contributions.}
In this paper, we propose \nameScheme, an efficient and secure cross-silo federated learning solution to address the aforementioned issues.
Technically, \nameScheme leverages a novel efficient and secure additively HE algorithm (\nameHE) to protect the privacy of local gradients and training results against honest-but-curious entities. 
In summary, this paper contains the following \textit{contributions}.

\begin{itemize}[leftmargin=0.3cm]
 \item Based on the hard problem of ring learning with errors (RLWE) and using a secure pseudorandom generator, we propose a novel additively HE algorithm, \nameHE, which can provide privacy protection for local gradients and training results without the need for secure channel transmission of ciphertexts.
 We prove that \nameHE is IND-CPA secure and can resist collusion attacks of $\kappa<N-1$ untrusted clients.
 \nameHE only has one-step secure decryption, and this decryption step does not require interactions and secure channels between clients, so it has lower communication and computational complexity than existing MKHE algorithms.

\item We design a polynomial packing method to pack multiple local gradients into a single plaintext and utilize fast Fourier transform (FFT) for fast polynomial multiplication in plaintext encryption. These further improve the computational efficiency and reduce the communication traffic caused by the encryption algorithm of \nameHE.

\item We provide a complexity analysis and comparison of \nameScheme and the state-of-the-art privacy-preserving FL models using MKHE or SKHE.
The analysis results present that \nameScheme outperforms current FL models using MKHE and it has the same computation and communication complexity as FL models using SKHE.
\item We conduct intensive experiments on a fully-connected neural network (FCN) with FMNIST dataset, an ALexNet network with CIFAR10 dataset, and an LSTM network with Shakespeare dataset.
The experimental results show that \nameScheme achieves approximate $204\times$-$953\times$ and $11\times$-$14\times$ training speedup while reducing the communication burden by $77\times$-$109\times$ and $1.25\times$-$2\times$ compared with the state-of-the-art SKHE-based FL models.

\end{itemize}

\textbf{Paper Organization.} The remainder of this paper is organized as follows.
In Section~\ref{sec:RelatedWork}, we describe some related works about privacy-preserving federated learning.
In Section~\ref{sec:problemFormulation}, we give a formal description of our system model, threat model, and design goals.
Section~\ref{sec:Preliminaries} introduces the preliminaries of deep learning, federated learning, and related cryptographic knowledge.
We propose an efficient and secure homomorphic encryption algorithm (\nameHE) and design a polynomial packing method for \nameHE in Section~\ref{sec:MPHE}, then we construct \nameScheme model using \nameHE in Section~\ref{sec:wholeModel}.
Section~\ref{sec:Analysis} analyzes the security and the complexity of \nameScheme.
Section~\ref{sec:experiments} presents a performance evaluation of \nameScheme,
followed by a conclusion and future works in Section~\ref{sec:conclusion}.

\section{Related Work}\label{sec:RelatedWork}
%
%In this section, we focus on the strict privacy requirements imposed by federated learning and survey existing techniques for meeting the requirements.
%
%We review some related works about (1) privacy solutions in federated learning and (2) homomorphic encryption for federated learning.

% \subsection{Cross-silo Federated Learning}
% Cross-silo FL is a distributed machine learning which enables to train a machine learning model with the help of a central server while keeping training data distributed over multiple clients.
% %
% Clients only need to submit their local gradients to the server during training, and hence the clients' local training data is protected.
% %
% However, recent studies \cite{DeepLearningHE} show that the private information of the training data can be revealed from the submitted gradients.
% %
% To solve this problem, various strategies have been proposed in federated setting.

% \begin{figure}[htbp]
%  \centering
%  \includegraphics[width=2.4in]{FL.pdf}
%  \caption{Federated learning.}
%  \label{fig:FL}
% \end{figure}

%\subsection{Privacy Solutions in Federated Setting}
%Many strategies have been proposed to protect the privacy of clients in federated learning.
%

Generally speaking, all the existing techniques to protect the privacy of clients in federated learning can be classified into four categories, as described below. %: differential privacy (DP)-based models, secure aggregation (SA)-based models, secure multi-party computation (SMC)-based models, and homomorphic encryption (HE)-based models.
%
%We briefly examine these solutions and comment on their suitability for federated learning.

%\subsubsection{SA-based}

\textbf{Secure aggregation (SA)} \cite{bonawitz2017practical} aims to ensure that the server learns no individual submissions from any clients but the aggregation results only.
While SA has been successfully deployed in cross-device FL \cite{ben2022scionfl, mohamad2023sok} in which the server hosts the training and obtains the trained model, it is unsuitable for 
% Therefore, it is suitable for cross-device FL but not 
cross-silo FL which requires that no one other than the clients knows the aggregation results. %for two reasons.
%First, SA-based FL model allows the cloud server to see the plaintext of the aggregation results, which does not meet the security requirement of cross-silo FL. 
%
% Besides, in SA-based model, clients must synchronize secret keys and zero-sum masks in each training iteration, which imposes a huge communication burden and is unpractical for cross-silo FL.
%
Moreover, security techniques in SA focus only on protecting the privacy of clients in a single training round \cite{so2021securing}. 
To ensure multi-round privacy security, in many SA-based models \cite{ben2022scionfl, mohamad2023sok,zheng2022aggregation,song2022eppda},  %\cite{schlegel2021codedpaddedfl,zheng2022aggregation,song2022eppda}, 
clients must synchronize secret keys or zero-sum masks in each training iteration, which imposes a huge communication burden and is impractical for cross-silo FL.
%
%The main method used in SA for protecting the privacy of clients is the secret sharing technique.
%
%First, it allows the central server to see the aggregation gradients, based on which the information about the trained model can be learned by an external entity (e.g., the public cloud running the central server).
%
%Second, in each iteration, clients must synchronize secret keys and zero-sum masks, imposing a strong requirement of synchronous training.

%\subsubsection{DP-based}

\textbf{Differential privacy (DP)} usually facilitates privacy-preserving FL by adding tailored noises to clients' gradients. %usually facilitates privacy-preserving FL by combining with model averaging and SGD
% which adds tailored noisy information to the training datasets or training parameters,
% has been widely applied to protect data privacy.
%
% Recent works proposes to employ a distributed selective SGD with DP and 
% DP-based model in \cite{shokri2015privacy} proposes a distributed selective SGD with DP, which aims to make a trade-off between data privacy and model accuracy.
DP-based model in \cite{shokri2015privacy} proposes a distributed selective SGD with DP to make a trade-off between data privacy and model accuracy.
%
% Multiple clients add noises to the gradients and selectively upload partial gradients to a central server for aggregation. %to update the global parameter.
%
Although the model can be efficiently implemented, it is proven to be insecure under honest-but-curious server \cite{DeepLearningHE}.
% The model is proven to be unsecure under honest-but-curious server \cite{DeepLearningHE}. %in the presence of the honest-but-curious central server \cite{DeepLearningHE}.
%
% DP-based model in \cite{stevens2022efficient} proposes a secure aggregation protocol for FL by combining DP and the hardness of learning with errors, so as to protect data privacy against the untrusted central server.
% In DP-based model \cite{stevens2022efficient}, combining with DP and the hardness of learning with errors, a secure aggregation protocol for FL is proposed to protect data privacy from untrusted central servers.
% DP-based model in \cite{stevens2022efficient} proposes a secure aggregation protocol for FL to protect data privacy against the curious server.
DP-based model in \cite{stevens2022efficient} aims to protect data privacy against the curious server by combining DP with the hardness of learning with errors.
% The protocol combines DP and the hardness of learning with errors to generate masks for masking the clients' gradients.
% The server aggregates the clients' masked gradients and then unmasks the aggregation result for the model update.
% However, the protocol requires the server to unmask the aggregation results of the clients' masked gradients, so the model cannot protect training results from the untrusted server.
%%
% However, the protocol cannot ensure the security of the training results from the server.
However, the model cannot ensure the security of the training results under the curious server.
Moreover, due to DP's noise injection, neither \cite{shokri2015privacy} nor \cite{stevens2022efficient} can achieve the best accuracy.
% the selectively uploaded gradients cannot achieve the best accuracy.
While these privacy violations and model accuracy degradations may be tolerable for mobile clients in cross-device FL, they raise serious concerns for institutions in cross-silo FL.

%\subsubsection{SMC-based}

% \textbf{Secure multi-party computation (SMC)} protocols contain garbled circuits and oblivious transfer \cite{relatedWork}.
% %
% Rouhani et al. \cite{DeepLearnSMC} utilize Yao's garbled circuit protocol to protect multiple clients' dataset privacy while they jointly train a deep learning model.
% %
% Mohassel et al. \cite{DeepLearnSMC2} employ a two-server model where clients upload their datasets to the two servers that train models with secure two-party computation.
% %
% This model is claimed to support secure arithmetic operations and is SMC-friendly alternative to some activation functions.
% %
% Although these SMC-based models have merits in terms of security, they introduce non-negligible communication costs \cite{xie2019bayhenn} and cannot support all of the existing activation functions.
%
%Compared with these works, \nameScheme has fixed communication rounds between the clients and the cloud server and can support all of the activation functions.

\textbf{Secure multiparty computation (SMC)} enables multiple parties to securely compute an agreed-upon function with their individual data. 
In this way, all parties know nothing except their own inputs and outputs.
SMC utilizes specifically designed computation and synchronization protocols among multiple parties. It has strong privacy guarantees of zero knowledge but is hard to achieve efficiently in a geo-distributed scenario like cross-silo FL \cite{Yang2019Federated}. 
ML algorithms must be specially adapted to suit the SMC paradigm \cite{Berry2004Privacy,Mohassel2018ABY,DeepLearnSMC2,damgaard2019new,dalskov2021fantastic,koti2022pentagod}.
To achieve better performance, privacy requirements may be lowered in some designs \cite{Berry2004Privacy,Mohassel2018ABY,DeepLearnSMC2}.
The state-of-the-art SMC protocols \cite{damgaard2019new,dalskov2021fantastic,koti2022pentagod} attempt to achieve strong security with high efficiency, but are only suitable to ML algorithms on a few clients (no more than five clients).
%Developers must carefully engineer the ML algorithms and divide the computation among parties to fit the SMC paradigm, 
%which may lower privacy requirement for better performance \cite{Berry2004Privacy,Mohassel2018ABY,DeepLearnSMC2}.

%\subsubsection{HE-based}
%
\textbf{Homomorphic encryption (HE)} allows performing certain computations (e.g., addition) directly on encrypted data. 
%It aims to achieve secure federated learning without reducing the model accuracy.
%
Many researchers design SKHE-based \cite{DeepLearningHE, Zhang2020BatchCrypt, zhang2021towards} or MKHE-based \cite{Mukherjee2016Two,chen2019efficient, Aloufi2020Blindfolded} cross-silo FL models to protect the privacy of the clients: each client submits its encrypted local updates to the central server for aggregating, and then downloads the aggregation result from the server for its local model update.
HE suits cross-silo FL for three reasons.
First, it can protect the privacy of the clients from untrusted third parties including the cloud server and external adversaries %being leaked by the server and the external adversaries 
as the clients' submissions are encrypted. %ciphertexts.
%
% Second, the model accuracy incurs no loss, as no noise is added to the clients' updates in the whole training process.
Second, it brings no loss in model accuracy, as no noise is injected into the clients' submissions during the whole training process.
Third, it requires no modifications of ML algorithms, as HE can be implemented as a plug-in module in the cross-silo FL framework.
%
%Thus, HE imposes no constraints on FL training.
Although HE is promising to FL training, the existing HE algorithms have shortcomings in protecting data privacy against internal clients or ensuring training result security from untrusted third parties.
%
% However, the existing HE algorithms including SKHE and MKHE have shortcomings in resisting privacy disclosure among internal clients or ensuring the security of the model results.
% However, SKHE and MKHE are short in resisting privacy leakage among internal clients and ensuring the security of the model results, respectively.
Besides, both SKHE and MKHE introduce significant computation and communication burdens.

\textbf{In general,}
 each of these privacy protection technologies has its advantages and disadvantages. 
%  MPC can provide strong privacy protection, but requires expert efforts to change ML algorithms for fitting secure computation. 
 MPC can provide powerful privacy protection but requires expert efforts to modify the existing ML algorithms to ensure the correctness of secure computation among distributed parties.
 DP can be used easily and efficiently, but provides weak privacy guarantees and introduces a loss in model accuracy. 
 SA is applicable to cross-device FL, but not to cross-silo FL, as it cannot guarantee the security of aggregation results and brings heavy synchronization costs. 
 HE can be simply plugged in cross-silo FL to provide strong privacy guarantees of clients' datasets against external parties
%  HE can be simply plugged in cross-silo FL to provide strong privacy guarantees in both datasets and model results against external parties%, while not modifying the algorithm and reducing accuracy. %
 without algorithm modifications or accuracy loss.
 However, without secure channels for transmitting ciphertexts, HE cannot protect data privacy from internal clients, and its variant algorithm MKHE aims to address this problem but brings new privacy threats to training results. 
 Besides, HE's computational and communication burden makes it less practical for production deployment.

% Table \ref{tab:modelComparison} presents the comparison between \nameScheme and six previous secure deep learning models.
% %
% It shows that \nameScheme is better than the previous models.

% \begin{table}[htbp]
% %\tiny %\scriptsize \footnotesize
% %\small
% \huge
% \centering
% \renewcommand\arraystretch{1}
% \caption{Comparison between the previous models and \nameScheme
% ($\CIRCLE$: supported, $\Circle$: not supported).}\label{tab:modelComparison}
% \resizebox{0.98\columnwidth}{!}{
% \begin{tabular}{|l|c|c|c|c|c|c|c|}
%  \hline

%  &\cite{DeepLearningFHE3} &\cite{DeepLearningFHE2}
%  &\cite{DeepLearningDP} &\cite{shokri2015privacy}
%  &\cite{DeepLearningHE} &\cite{FederatedLearningHE} & Ours
%          \\ \hline\hline
 
%  Accuracy
%  &$\Circle$ &$\Circle$ &$\Circle$ &$\Circle$&$\CIRCLE$ &$\CIRCLE$ &$\CIRCLE$  \\ \hline

%  \makecell*[l]{Security of data over \\ single-round encryption}
%  &$\CIRCLE$ &$\CIRCLE$ &$\Circle$ &$\Circle$ &$\CIRCLE$ &$\CIRCLE$ &$\CIRCLE$  \\ \hline

%  \makecell*[l]{Security of data over \\ multi-round encryption}
%  &$\Circle$ &$\Circle$ &$\Circle$ &$\Circle$ &$\Circle$ &$\CIRCLE$ &$\CIRCLE$  \\ \hline

%  \makecell*[l]{Security of model over \\ single-round encryption}
%  &$\CIRCLE$ &$\CIRCLE$ &$\Circle$ &$\Circle$ &$\CIRCLE$ &$\CIRCLE$ &$\CIRCLE$  \\ \hline

%  \makecell*[l]{Security of model over \\ multi-round encryption}
%  &$\Circle$ &$\Circle$ &$\Circle$ &$\Circle$ &$\Circle$ &$\CIRCLE$ &$\CIRCLE$  \\ \hline

% Efficiency
%  &$\Circle$ &$\Circle$ &$\CIRCLE$ &$\CIRCLE$ &$\Circle$ &$\Circle$ &$\CIRCLE$  \\ \hline
% \end{tabular}
% }
% \end{table}

\section{Problem Formulation}\label{sec:problemFormulation}
In this section, we introduce system model, threat model, and design goals of \nameScheme.

\subsection{System Model and Threat Model}\label{sssec:modelAssumption}

\textit{(1) System Model.}
%As shown in Fig.~\ref{fig:distributedFramework}, the system of \nameScheme consists of
The system of \nameScheme consists of
three entities: An administration server (AS), $N$ distributed clients, and a central cloud server.
The $N$ distributed clients are a small number of clients who collaborate to train their respective models with all the $N$ clients' datasets.
The cloud server is a coordinator to assist clients in collaboration.
The AS is an alternative entity \footnote{The clients can also generate parameters and keys by themselves using a secure distributed key generation method we introduce in Section 6: \nameScheme -- Initialization.}
that generates public parameters and keys for all clients in \nameScheme.
The general task flow of the three entities is as follows.
The $N$ clients establish their cooperation in training models with all clients' datasets and then send a request to AS and the cloud server.
Upon receiving the request from clients, AS initializes the neural network model, generates keys and parameters, and distributes them to $N$ clients.
After receiving an identical neural network model from AS, the $N$ clients separately train the network model with their respective datasets to generate local parameters.
The clients then submit their local parameters to the cloud server.
The cloud server aggregates all the received local parameters to a global parameter.
The cloud then returns the global parameter to the clients for local model updates.
After numerous client-cloud iterations, the network model can be trained finally.

%\begin{figure}[h]
% \centering
% \includegraphics[width=3in]{}
% \caption{\nameScheme system model.}
% \label{fig:distributedFramework}
%\end{figure}

\textit{(2) Threat Model.}
AS is considered to be a trusted entity (Alternatively, without the trusted AS, the clients can generate their keys using the existing MPC techniques \cite{ben2008fairplaymp}. For convenience, unless otherwise specified, we use AS in the subsequent parts of this paper).
The clients collaborate to train a model which is profitable to the clients themselves, so the clients are considered to be honest-but-curious.
That is, the clients perform the training processes honestly but are curious about the other clients' privacy.
The public cloud server that sells computing services is also honest-but-curious \cite{malekzadeh2021honest,liu2022privacy,wang2022privacy}.
Therefore, the possible threats come from the clients, the public cloud server, and outside adversaries who eavesdrop on the communication channels between the clients and the cloud server.
Each client may infer the other clients' privacy from the transferred gradients by using their shared knowledge.
The cloud server may recover the sensitive information of the clients and steal the training results/trained model from the data it can access.
The adversaries can eavesdrop on messages sent by the clients to the cloud from communication channels.
Thus, the adversaries can also recover the sensitive information of the clients and steal the model. %
Since the threat behaviors of the adversaries are the same as that of the cloud, in this paper, we only consider the threat from the public cloud and the clients.

In \nameScheme, we do not assume secure channels between clients and the cloud server.
All clients and the cloud transmit their messages on public channels.
The trusted AS is employed to generate and distribute secret keys and public parameters only in the initialization phase of the federated model.
Due to the fact that a federated model trained to converge generally requires thousands of iterations, the amortized cost (i.e., the average cost taken per training iteration) of AS is very small.
Therefore, the amortized cost of \nameScheme with the trusted AS assumption is less than that of FL models with the secure channels assumption %(such as MKHE-based FL models)
that in each training iteration, the clients and the cloud communicate through secure pair-wise communication channels.

\subsection{Design Goals}

\nameScheme aims to achieve the following design goals.

\begin{itemize}[leftmargin=0.3cm]

  \item \textbf{Accuracy.} The prediction accuracy of ciphertext learning should not be reduced compared with plaintext learning.

  \item \textbf{Security.} The clients' private dataset should keep hidden from the other clients and the public cloud server.
    The aggregation results should be secret to the cloud server.

  \item \textbf{Efficiency.} The computation and communication burden on the clients and the cloud should be low.
\end{itemize}

\section{Preliminaries}\label{sec:Preliminaries}
In this section, we introduce the background knowledge of deep learning, cross-silo federated learning, homomorphic encryption, ring learning with errors problem, and CKKS's encoding method.
Table~\ref{tab:notations} summarizes mathematical notations used in this paper.

\subsection{Deep Learning}\label{ssec:Prelim_DL}

Deep learning is a machine learning method that trains deep neural network through iteratively performing feedforward and backpropagation processes, and the final trained neural network can be used to predict the output result of a given input.
The feedforward process can be expressed as $\hat{\boldsymbol{y}}=f(\boldsymbol{w},\boldsymbol{x})$, where $\boldsymbol{x}$ is an input vector, $\boldsymbol{w}$ is a weight parameter vector of the neural network, and $\hat{\boldsymbol{y}}$ is the predicted value of the input vector.
Given a loss function $l$, the training task is to minimize the loss
$$\mathcal{L}(\boldsymbol{w};D)=\dfrac{1}{|D|} \sum_{(\boldsymbol{x}_i,\boldsymbol{y}_i) \in D}
l(\boldsymbol{y}_i, f(\boldsymbol{x}_i,\boldsymbol{w})),$$ %\mathcal{L}\left(\boldsymbol{w};(\boldsymbol{x}_i,\boldsymbol{y}_i)\right),$$
where $D=\{(\boldsymbol{x}_i,\boldsymbol{y}_i); i\in I\}$ represents the training dataset and $|D|$ is the size of the dataset.
In the backpropagation phase, the stochastic gradient descent (SGD) method is used to minimize the loss function and then update the weight parameters.
Specifically, in $t$th iteration, SGD finds the gradient
\begin{equation}\label{eq:gradientGen}
  \boldsymbol{g}^t= \nabla_{\boldsymbol{w}} \mathcal{L}\left(\boldsymbol{w}^t;D^t\right),
\end{equation}
where $D^t\subseteq D$ is a mini-batch of training dataset randomly selected from $D$, and $\nabla_{\boldsymbol{w}} \mathcal{L}\left(\boldsymbol{w}^t;D^t\right)$ represents the derivative of the loss with respect to the weight parameters.
Then, SGD updates the weight parameters as
\begin{equation}\label{eq:weightUpdate}
  \boldsymbol{w}^{t+1}\leftarrow \boldsymbol{w}^t-\eta\cdot \boldsymbol{g}^t,
\end{equation}
where the hyperparameter $\eta$ is the learning rate that determines the weight parameters' descent speed.
The weight parameters reach the optimum, step by step, according to the gradient.
Finally, the trained neural network is obtained.

\subsection{Cross-silo Federated Learning}

FL is a machine learning framework which enables a group of distributed clients to collaboratively train a
shared model while keeping the clients' private data locally. %in device.
A coordination server is employed to orchestrate the training process.
A typical FL training process is presented as follows. %encompassing the federated averaging algorithm \cite{mcmahan2017communication} is as follows.
First, a coordination server and $N$ clients initialize FL by deciding a training task, a shared global model $\boldsymbol{w}^0$, and the hyperparameters.
Then, the server and the clients collaborate to train a machine learning model by repeating the following steps until the final trained model is reached.
\begin{enumerate}
  \item \textbf{Client local training:} For each client $i$, it trains the global model $\boldsymbol{w}^t$ (where $t$ is the current iteration index) with its private dataset in local to obtain a local gradient $$\boldsymbol{g}_i^t = \nabla_{\boldsymbol{w}} \mathcal{L}\left(\boldsymbol{w}^t; D_i^t\right).$$
      Then, it submits the local gradient $\boldsymbol{g}_i^t$ to the server.

  \item \textbf{Server aggregation:} After receiving all clients' local gradients, the server aggregates them with an aggregation rule $f_{\rm{add}}$ to form a global gradient
  $$\boldsymbol{G}^t=f_{\rm{add}}(\boldsymbol{g}_1^t,\cdots,\boldsymbol{g}_N^t).$$
  %$$\boldsymbol{G}^t=\sum_{i=1}^N \alpha_i\boldsymbol{g}_i^t.$$
  In this paper, we utilize \texttt{FedAvg} \cite{mcmahan2017communication} as our federated aggregation method as it is the most commonly used, then the aggregation rule can be generalized as $$f_{\rm{add}}(\boldsymbol{g}_1^t,\cdots,\boldsymbol{g}_N^t)=\sum_{i=1}^N \alpha_i\boldsymbol{g}_i^t,$$
  where $\alpha_i=|D_i|$.
  % It averages the local gradients as $$\texttt{FedAvg}(\boldsymbol{g}_1^t,\cdots,\boldsymbol{g}_N^t)=\frac{}{}$$
  %
      Then the server returns the global gradient $\boldsymbol{G}^t$ to the clients.

  \item \textbf{Client model update:} Each client updates its local model as $$\boldsymbol{w}_i^{t+1}=\boldsymbol{w}_i^{t}-\eta\cdot\frac{\boldsymbol{G}^t}{N}, \textrm{ for } i\in [N].$$
\end{enumerate}

According to different application scenarios, FL can be roughly divided into cross-device FL and cross-silo FL \cite{mcmahan2021advances}.
In cross-device FL, the clients are a very large number of mobile/edge devices with limited computing power and unreliable communications.
In cross-silo FL, the clients are a small number of companies/organizations with rich computing resources and reliable communication.

\subsection{Additively Homomorphic Encryption}\label{ssec:Prelim_AHE}

Additively homomorphic encryption (AHE) enables the arithmetic operation of addition to be performed on a number of ciphertexts which is equivalent to performing the same operations on the corresponding plaintexts.
Therefore, the privacy-sensitive information of the plaintexts can be preserved using AHE to encrypt and implement the ciphertext aggregation computations.
In general, AHE algorithm consists of four algorithms: (\texttt{GenKey}, \texttt{Enc}, \texttt{Dec}, \texttt{Eval}), which are a key generation algorithm, an encryption algorithm, a decryption algorithm, and an evaluation algorithm, respectively.
AHE algorithm is homomorphic, if $\texttt{Dec}(dk,\texttt{Eval}(\texttt{Enc}(ek,m_1),\cdots,\texttt{Enc}(ek,m_N)))=\texttt{Dec}(dk,\texttt{Enc}(ek,\sum_{i=1}^N m_i))$ holds, where $\{m_i\}_{i\in[N]}$ is a set of plaintexts, and $ek$ and $dk$ are generated by \texttt{KeyGen} algorithm, representing the encryption key and the decryption key, respectively.
Its potential meaning is that the aggregation result of the plaintexts can be obtained by evaluating their corresponding ciphertexts, rather than directly calculating the plaintexts, so as to realize secure computation of addition.

\subsection{Ring Learning with Errors}\label{ssec:Prelim_RLWE}

% Ring learning with errors (RLWE)
RLWE is a computational problem that serves as the foundation of quantum-resistant cryptographic algorithms and also provides the basis for homomorphic encryption, such as BFV \cite{fan2012somewhat} and CKKS \cite{cheon2017homomorphic}.
RLWE problem is built on the arithmetic of polynomials with coefficients from a finite field.
Let $n$ be a power of two. We denote by $\mathcal{R}=\mathbb{Z}[x]/(X^n+1)$ the ring of integers of the $(2n)$th cyclotomic field and $\mathcal{R}_q=\mathbb{Z}_q[x]/(X^n+1)$ the residue ring of $\mathcal{R}$ modulo an integer $q$.
An element of $\mathcal{R}$ (or $\mathcal{R}_q$) is a polynomial $a(x)=\sum_{i=0}^{n-1} a_i\cdot x^i$ and can be identified with a vector of coefficients $(a_0,\cdots,a_{n-1})$ in $\mathbb{Z}^n$ (or $\mathbb{Z}_q^n$).
Let $\chi_{s}$ and $\chi_{e}$ be distributions over $\mathcal{R}$ which output small polynomials, i.e., for $s\leftarrow \chi_{s},e\leftarrow\chi_{e}$ and small bounds $B_s$ and $B_e$, there are $\|s\|_{\infty}\leq B_s$ and $\|e\|_{\infty}\leq B_e$ with an overwhelming probability.
A typical RLWE assumption states that the RLWE samples $(b,a)\in \mathcal{R}_q^2$ are computationally indistinguishable from uniformly random elements of $U(\mathcal{R}_q^2)$, where $b=[-a\cdot s+e]_q, a\leftarrow U(\mathcal{R}_q), e\leftarrow \chi_e, s\leftarrow \chi_s$.

\subsection{CKKS's Message Encoding and Decoding}

The plaintext space of RLWE-based HE schemes is usually a cyclotomic polynomial ring $\mathbb{Z}_q[X]/(X^n+1)$ of a finite characteristic.
However, the messages in the federated learning that need to be encrypted are real numbers.
Therefore, we should encode real numbers as integral coefficient plaintexts in the ring $\mathbb{Z}_q[X]/(X^n+1)$ before encrypting.
CKKS scheme proposes an efficient encoding and decoding method for transforming between complex vectors and integral coefficient polynomials.
With this encoding method, operations on the encoded polynomials can be accelerated by using FFT.
CKKS's encoding method $\texttt{Ecd}(\boldsymbol{z})\rightarrow m(X)$ encodes a complex message vector $\boldsymbol{z}=(z_1,\cdots,z_{n/2})\in \mathbb{C}^{n/2}$ as a plaintext polynomial $m(X)\in \mathcal{R}_q:=\mathbb{Z}_q/(X^n+1)$ by computing $m(X)=\left\lfloor \Delta\cdot\phi^{-1}(\boldsymbol{z})\right\rceil\in \mathcal{R}_q$, where $\Delta>1$ is a scaling factor and $\phi$ is a ring isomorphism ${\phi:\mathbb{R}[X]/(X^{n}+1)\rightarrow \mathbb{C}^{n/2}}$.
Its corresponding decoding method $\texttt{Dcd}(m(X))\rightarrow \boldsymbol{z}$ decodes a plaintext polynomial $m(X)\in \mathcal{R}_q$ to a complex vector $\boldsymbol{z}\in\mathbb{C}^{n/2}$ by computing $\boldsymbol{z}=\Delta^{-1}\cdot\phi(m(X))\in\mathbb{C}^{n/2}$.
The scaling factor $\Delta>1$ is utilized to control the encoding/decoding error which is occurred by the rounding process.
Therefore, the approximate equation $\texttt{Dcd}(\texttt{Ecd}(\boldsymbol{z};\Delta);\Delta)\approx\boldsymbol{z}$ can be obtained by controlling $\Delta$.
The ring-isomorphic property of the mapping ${\phi:\mathbb{R}[X]/(X^{n}+1)\rightarrow \mathbb{C}^{n/2}}$ can ensure homomorphic computing of addition, i.e.,
there are $\texttt{Dcd}(\texttt{Ecd}(\boldsymbol{z}_1)+\texttt{Ecd}(\boldsymbol{z}_1);\Delta)\approx \boldsymbol{z}_1+\boldsymbol{z}_2$.
Therefore, when $\Delta$ is chosen appropriately, the approximate correctness of the additions in the encoded state can be guaranteed.

\begin{table}%[htbp]
 \footnotesize
 %\centering
 \renewcommand\arraystretch{1.23}
 \caption{The notations and their semantic meanings.}\label{tab:notations}
 %\hspace{-0.25cm}
 \begin{tabular}{m{2cm}|m{6.2cm}}
  \hline
  \textbf{Notations}   & \textbf{Meanings}\\\hline \hline
  $\boldsymbol{u},\boldsymbol{v},\cdots$ & Vectors in bold letter\\
  \hline
  $N$ & The number of the clients \\
  \hline
%   $P_1,\cdots,P_N$ & A set of clients \\
%   \hline
  $D, |D|$ & The training dataset and its size\\
  %$D_i, |D_i|$ & The $i$th clients' local dataset and the size of the dataset\\
  \hline
  $(\boldsymbol{x},\boldsymbol{y})$ & A pair of training example (input data and its truth label) sampled from $D$\\
  \hline
  $f$ & Network function\\
  \hline
  $\hat{\boldsymbol{y}}$ & Predicted value of the neural network\\
  \hline
%  $\boldsymbol{w}$ & Weight parameter\\
%  \hline
  $\hat{\boldsymbol{w}}^t$ & A neural network model of the $t$th iteration\\
  \hline
  $g^{ij}$ & The gradient of the $i$th output in the $j$th layer of the neural network \\
  \hline
  $\boldsymbol{g}$ & The (local) gradient parameter which is the flattened vector of all gradient matrixes in the neural network, i.e., $\boldsymbol{g}=[g^{11},\cdots,g^{ij}]\in \mathbb{R}^{ij}$\\
  \hline
  $\mathcal{L}(\boldsymbol{w};(\boldsymbol{x},\boldsymbol{y}))$ & Loss function on the training example $(\boldsymbol{x},\boldsymbol{y})$\\
  \hline
  $\eta$ & Learning rate\\
  \hline
  $i\in[n]$ & $i=1,\cdots,n$ \\
  \hline
  $i\in[m,n]$ & $i=m,\cdots,n$ \\
  \hline
  $[0,n]^k$ or $[0,n)^k$ & A base-$n$ or base-$(n-1)$ integer with $k$ bits\\
  \hline
  $[z]_q$ & The reduction of the integer $z$ modulo $q$\\
  \hline
  $(z)_2$ & The binary representation of the integer $z$\\
  \hline
  $\|\cdot\|$ & Infinity norm\\
  \hline
  $\lfloor \cdot \rceil$ & The coefficient-wise rounding function\\
  \hline
%  $\langle \boldsymbol{u},\boldsymbol{v}\rangle$ & The inner product of two vectors $\boldsymbol{u},\boldsymbol{v}$\\
%  \hline
  $\lambda$ & Security parameter\\
  \hline
  $X(\lambda)$ & An integer $X$ satisfies the security level $\lambda$\\
  \hline
  $\mathbb{R},\mathbb{C},\mathbb{Z},\mathbb{Z}_q$ & The set of real numbers, the set of complex numbers, the set of integers, and the set of $\mathbb{Z}$ modulo an integer $q$, respectively\\
  \hline
%   $\mathcal{R}=\mathbb{Z}[x]/(X^n+1),\mathcal{R}_q=\mathbb{Z}_q[x]/(X^n+1)$ & The ring of integers of the $(2n)$th cyclotomic field with $n$ the power of two, and the residue ring of $\mathcal{R}$ modulo an integer $q$, respectively\\
%   \hline
 $\mathcal{R}=\mathbb{Z}[x]/(X^n+1)$ & The ring of integers of the $(2n)$th cyclotomic field with $n$ the power of two\\
 \hline
 $\mathcal{R}_q$ & The residue ring of $\mathcal{R}$ modulo an integer $q$\\
 \hline
  $B_s,B_e,B_r$ & Small bounds\\
  \hline
  $\chi_{s}, \chi_{e}$ and $\chi_{r}$ & Distributions over $\mathcal{R}$ which output polynomials $s,e,r$ meeting $\|s\|_{\infty}^{\texttt{can}}\leq B_s$, $\|e\|_{\infty}^{\texttt{can}}\leq B_e$, $\|r\|_{\infty}^{\texttt{can}}\leq B_r$\\
  %$D_\sigma$ $(\sigma>0)$ & A distribution over $\mathcal{R}$ which samples $n$ coefficients independently from the discrete Gaussian distribution with mean 0 and variance $\sigma^2$\\
  \hline
  $U(\cdot)$ & Uniform distribution\\
  \hline
  $*\leftarrow \chi$ & Sampling from a set or a distribution $\chi$ \\
  \hline
  $s_i, i\in [N]$ & The encryption keys of the $N$ clients\\
  \hline
  $s$ & The decryption key\\
  \hline
  $\log$ & The binary logarithm, i.e., the logarithm to the base 2\\
  \hline
 \end{tabular}
\end{table}

\section{\nameHE Design}\label{sec:MPHE}
This section introduces the design of \nameHE which is used to protect privacy-sensitive information in cross-silo FL.
%\nameHE is an RLWE-based HE which consists of

\subsection{Overview of \nameHE}\label{ssec:overview}

%\nameHE is designed for security and communication efficiency.

Here are four important considerations for \nameHE design to achieve its security and efficiency. 
\begin{enumerate}[leftmargin=0.3cm]
  \item \textbf{For privacy protection of the clients' local gradients under honest-but-curious entities.} %in a single-round training.
  In cross-silo FL, curious entities (the clients and the cloud) may infer the privacy of (other) clients from the information transmitted during training. 
    Therefore, \nameHE should support secure aggregation to prevent this privacy leakage. 
    The design of \nameHE mainly lies in the following two folds.
    \begin{itemize}
        \item We design the encryption algorithm of \nameHE in the form of multiple keys, in which clients encrypt their plaintexts with different secret keys.
        Thus, any curious entity cannot infer the (other) clients' plaintexts from the transmission ciphertexts it eavesdrops on from public channels.
        Besides, the clients can correctly decrypt a ciphertext only when the ciphertext is an aggregation of all clients' local ciphertexts; Otherwise, an aggregation ciphertext of partial clients' local ciphertexts cannot be correctly decrypted.
        Thus, \nameHE can prevent privacy leakage from publicly shared information in FL and it is robust to collusion between $\kappa<N-1$ clients.

    \item The secure aggregation of \nameHE should ensure both security and efficiency over multiple training rounds in FL.
    The most common solution of secure aggregation to ensure multi-round privacy security is enabling all clients in FL to regenerate the zero-sum masks/secret keys of its privacy protection algorithm. %the random parameters of the encryption algorithm.
    In each iteration, it requires the clients to coordinate to generate and then share the masks/keys in secure channels, which significantly increases the cost of mask/key generation and communication.
    Different from the common solution, we turn to generate a primary random element in the encryption algorithm to replace each round of secret key generation and enable the clients to generate the same primary random element locally, to provide multi-round privacy guarantees while requiring no extra communication burden.

    \end{itemize}

  \item \textbf{For security of the decryption result under the honest-but-curious public cloud server.}
  	Considering that the distributed decryption protocol is not secure in public channels, as the final decryption process does not need any key and anyone can obtain the decryption result for free, we thus design %a one-step secure decryption method
   that decryption can only be performed locally by clients who have the decryption key.
%   	the decryption process of \nameHE must require a decryption key. %which is shared among the clients.
    %
    Thereby ensuring that the decryption result is not obtained directly or through inferring.

  \item \textbf{For communication efficiency.}
    The previous MKHE algorithms require two-step decryption, i.e., partially decrypted and fully decrypted.
    The partial decryption process consumes large communication traffic because it requires all clients to decrypt an aggregation ciphertext separately with their respective keys, and then interact with each other to transmit their partial decryption results which are still ciphertexts. 
    Therefore, we tend to design a one-step decryption protocol that removes the partial decryption step, to improve communication efficiency.
    With our one-step decryption protocol, the clients who have a decryption key can decrypt an aggregation ciphertext locally without interaction.

  \item \textbf{For computational efficiency.}
  After removing the partial decryption step, the computation cost of \nameHE mainly lies in encryption for the RLWE-based cryptography.
  To improve the efficiency of encryption, we employ CKKS' encoding method to encode clients' gradients into polynomials and design a polynomial packing technique to pack multiple polynomials into a single plaintext.
  With the use of the encoding and the packing, the number of plaintexts to be encrypted is reduced by times, so the computation cost of encryption and the communication traffic of ciphertext in the whole model is reduced.
  Further, FFT can be utilized for fast polynomial multiplication in the encryption algorithm of \nameHE.
\end{enumerate}

\subsection{\nameHE Design}

In this section, we first introduce the design of \nameHE which is used to protect the privacy of local gradients and training results in FL.
Then we present how multiple messages can be encoded and packed into a single plaintext to improve the efficiency of \nameHE.

\subsubsection{Encryption Algorithm of \nameHE}\label{sssec:MKHE}
We design \nameHE as a symmetric HE algorithm based on the hard problem of RLWE.
Specifically, \nameHE consists of a tuple of six algorithms denoted as $\Pi_{HE}=(\texttt{HE.Setup},\texttt{HE.KeyGen},\texttt{HE.Enc},\texttt{HE.Dec},\texttt{HE.Eval},$ $\texttt{HE.PRPG})$.
\begin{itemize}[leftmargin=0.3cm]
  \item[$\bullet$] \texttt{HE.Setup}$(1^{\lambda})\rightarrow \boldsymbol{PP}$: Setup algorithm \texttt{HE.Setup} takes as input a security parameter $\lambda$ and outputs public parameters $\boldsymbol{PP}=\{n,p,q,\chi_s,\chi_e\}$, where $n$ is the RLWE dimension, $q=q(\lambda)$ is a power-of-two integer, $p<q$ is an integer, and $\chi_s$ and $\chi_e$ are secret key distribution and error distribution, respectively.

  \item[$\bullet$]
    $\texttt{HE.KeyGen}(\boldsymbol{PP})\rightarrow (\{s_i\}_{i\in[N]},s,a^0,B)$: The key generation algorithm \texttt{HE.KeyGen} takes as input the public parameters $\boldsymbol{PP}$ generated in \texttt{HE.Setup}, and outputs $N$ secret/encryption keys $\{s_i\}_{i\in[N]}$, a decryption key $s$, a random public polynomial $a^0$, and a random secret parameter $B$.
     Then, it distributes $\{a^0,s_i,s\}$ to the $i$th client through a secure channel.
     Thus, the different clients own different encryption keys and the same decryption key.
     Note that $a^0$ and $s$ are the same for all the clients. The specific values of the parameters and the keys are as follows.
      \begin{itemize}
        \item The public parameter $a^0$ is uniformly sampled from $\mathcal{R}_q$.
        \item The $N$ encryption keys $\{s_i\}_{i\in[N]}$ have the same distribution. We sample the $N$ encryption keys from $\chi_s$.%
        \item The decryption key is the sum of the $N$ encryption keys, i.e., the decryption key is $s=\left[\sum_{i=1}^Ns_i\right]_q$.
        \item The random secret parameter $B$ is uniformly sampled from $[0,1]^k$, where $k$ is equal to or greater than the number of binary bits of the maximum iteration of FL training.
      \end{itemize}

  \item[$\bullet$]
    $\texttt{HE.Enc}(a^t,s_i,m_i^t)\rightarrow \boldsymbol{c}_i^t$: Encryption algorithm \texttt{HE.Enc} encrypts a plaintext polynomial $m_i^t$ using the encryption key $s_i$ and the random public polynomial $a^t\in\mathcal{R}_q$, where $t$ represents the current iteration number.
    The algorithm then outputs a ciphertext polynomial
    \begin{equation}\label{eq:Encryption}
        \boldsymbol{c}_i^t \leftarrow \texttt{HE.Enc}(a^t,s_i, m_i^t) = \left[a^t\cdot s_i+\tilde{p}\cdot e_i^t+m_i^t\right]_q,
    \end{equation}
    where $e_i\leftarrow \chi_e$ and $\tilde{p}$ is a polynomial of degree $n-1$ with all coefficients of $p$.

  \item[$\bullet$] $\texttt{HE.Eval}\left(\{\boldsymbol{c}_i^t\}_{i\in[N]}\right)\rightarrow \boldsymbol{C}^t$:
      Evaluation algorithm \texttt{HE.Eval} adds $N$ ciphertexts $\{\boldsymbol{c}_i^t\}_{i\in[N]}$, and outputs an aggregation ciphertext
      \begin{equation}\label{eq:aggregation2}
        \begin{split}
          \boldsymbol{C}^t & \leftarrow \texttt{HE.Eval}\left(\{\boldsymbol{c}_i^t\}_{i\in[N]}\right)
          =\sum_{i=1}^N \boldsymbol{c}_i^t\\
          & = \left[a^t\cdot \sum_{i=1}^N s_i + \tilde{p}\cdot \sum_{i=1}^N e_i^t + \sum_{i=1}^N m_i^t\right]_q\\
            & = \left[a^t\cdot s + \tilde{p}\cdot e^t + m^t_{\textrm{add}}\right]_q,
        \end{split}
      \end{equation}
      where $e^t=\sum_{i=1}^N e_i^t\in \mathcal{R}_q$ and $m^t_{\textrm{add}}=\sum_{i=1}^N m_i^t\in \mathcal{R}_q$.

  \item[$\bullet$]
    $\texttt{HE.Dec}(a^t, s,\boldsymbol{C}^t)\rightarrow M^t$: Decryption algorithm \texttt{HE.Dec} decrypts an aggregation ciphertext $\boldsymbol{C}^t$ of the $t$th iteration using the decryption key $s$ and outputs the aggregation plaintext
      \begin{equation}\label{eq:decryption}
        \begin{split}
          M^t &\leftarrow \texttt{HE.Dec}(a^t, s,\boldsymbol{C}^t) = \left[\left[\boldsymbol{C}^t-a^t\cdot s\right]_q\right]_p\\
            & = \left[\left[ \left[a^t\cdot s + \tilde{p}\cdot e + m_{\textrm{add}}\right]_q - [a^t\cdot s]_q \right]_q\right]_p\\
            & = m_{\textrm{add}},
        \end{split}
      \end{equation}

  \item[$\bullet$]
    $\texttt{HE.PRPG}(t+1,B)\rightarrow a^{t+1}$: The pseudorandom polynomial generation algorithm \texttt{HE.PRPG} takes as inputs the iteration number $t+1$  and the secret parameter $B$, then it outputs a pseudorandom polynomial $a^{t+1}\in\mathcal{R}_q$ as
    % takes the iteration number $t+1=(x_1\cdots x_k)_2$ as its seed, then it outputs a pseudorandom polynomial $a^{t+1}\in\mathcal{R}_q$ as
    \begin{equation}
     \begin{split}
        a^{t+1} &\leftarrow \texttt{HE.PRPG}(t+1,B) \\
        &=\texttt{Vec2Poly}\left(\texttt{PRG}\left((x_1 \oplus b_1)||\cdots|| (x_k\oplus b_k)\right)\right), %F_{a^t,\{b_j\}_{j\in[k]}}\big(x_1\cdots x_k\big) := \left\lfloor a^t\cdot\Pi_{j=1}^k b_j^{x_i}\right\rceil_q,
     \end{split}
    \end{equation}
    where ``$||$" represents the concatenation of binary digits and $(x_1||\cdots|| x_k)_2=(x_1\cdots x_k)_2=t+1$ and $(b_1||\cdots|| b_k)_2=(b_1\cdots b_k)_2=B$ hold; $\texttt{PRG}$ is a secure pseudorandom generator defined in \cite{bonawitz2017practical} with input space $[0,1]^{k}$ and output space $[0,q)^{n}$; $\left((x_1 \oplus b_1)||\cdots|| (x_k\oplus b_k)\right)$ is the seed of $\texttt{PRG}$;
    \texttt{Vec2Poly} transforms the $n$-dimension output vector of \texttt{PRG} to a polynomial in $\mathcal{R}_q$.
    Using the pseudorandom polynomial generation algorithm \texttt{HE.PRPG}, in each training iteration $t$, the clients can generate the same primary random element $a^t$ locally for the encryption algorithm \texttt{HE.Enc} without interaction,
    % In each training iteration $t$, the clients locally generate the same primary random element $a^t$ using \texttt{HE.PRPG},
    thus providing multi-round privacy guarantees while requiring no extra communication burden.
\end{itemize}

\textbf{Security.} \nameHE is IND-CPA secure and it can resist collusion attacks between $\kappa<N-1$ clients.
Besides, the decryption algorithm of \nameHE requires only one-step decryption and the decryption results can only be obtained by the one who has the decryption key, so the security of decryption results can be ensured. 
We next prove the security of \nameHE.

(1) \nameHE is IND-CPA secure %(indistinguishability under chosen-plaintext attack)
under the RLWE assumption of parameter $(n,q, \chi_s,\chi_e)$ and the security/randomness of \texttt{PRG}. We prove it by showing that the distribution
\begin{equation*}
    \begin{split}
        \{(B,a^0,a^t, \boldsymbol{c}_i^t):&\boldsymbol{PP}=\{n,p,q,\chi_s,\chi_e\}\leftarrow \texttt{HE.Setup}(1^{\lambda}),\\
         &(\{s_i\}_{i\in[N]},s,a^0,B)\leftarrow\texttt{HE.KeyGen}(\boldsymbol{PP}),\\
         &a^t\leftarrow \texttt{HE.PRPG}(t,B),\\
         &\boldsymbol{c}_i^t \leftarrow \texttt{HE.Enc}(a^t,s_i, m_i^t)\} \\
         %= \left[a^t\cdot s_i+\tilde{p}\cdot e_i^t+m_i^t\right]_q,
    \end{split}
\end{equation*}
is computationally indistinguishable from the uniform distribution over $[0,1]^k\times\mathcal{R}_q\times\mathcal{R}_q\times\mathcal{R}_q$ for an arbitrary $m_i^t\in\mathcal{R}_p$.
The proof of this result is presented in Section~\ref{ssec:secAna}: Security Analysis -- the proof of Theorem~\ref{the:IND-CPA}.

(2) \nameHE can provide multi-round privacy guarantees under the RLWE assumption of parameter $(n,q, \chi_s,\chi_e)$ and the randomness of the primary random element $a^t, (t=1,2,\cdots)$.
We prove it by presenting that the distribution
\begin{equation*}
%\small
    \begin{split}
       &\{(a^t,a^{\tau},\tilde{\boldsymbol{C}}^t,\tilde{M}^t):\\
       & a^t\leftarrow\texttt{HE.PRPG}(t,B),a^{\tau}\leftarrow\texttt{HE.PRPG}(\tau,B),\\%=\texttt{Vec2Poly}\left(\texttt{PRG}\left((x_1 \oplus b_1)||\cdots|| (x_k\oplus b_k)\right)\right)
       &\tilde{\boldsymbol{C}}^t\leftarrow[\boldsymbol{C}^t-\boldsymbol{c}_i^t+\boldsymbol{c}^{\tau}_i]_q,\\%=\sum_{k=1,k\neq i}^N c_k^{t+1} + c_i^t, i\in[N]\\
       & \tilde{M}^t \leftarrow \texttt{HE.Dec}(a^t, s,\tilde{\boldsymbol{C}}^t)\} % = \left[\left[\tilde{\boldsymbol{C}}^{t+1}-a^{t+1}\cdot s\right]_q\right]_p,
    \end{split}
\end{equation*}
is computationally indistinguishable from the uniform distribution over $\mathcal{R}_q^3\times\mathcal{R}_p$ for arbitrary spanning-rounds aggregation ciphertext $\tilde{\boldsymbol{C}}$ (which is an aggregation of $N$ local ciphertexts spanning two rounds) and its $N$ original plaintexts.
The above proof can be extended to arbitrary spanning-rounds aggregation ciphertext that is an aggregation of $N$ local ciphertexts spanning multiple rounds.
This presents that the decrypted plaintext of arbitrary spanning-rounds aggregation ciphertext is a random polynomial in $\mathcal{R}_p$.
The proof of this result is presented in Section~\ref{ssec:secAna}: Security Analysis -- the proof of Theorem~\ref{the:SecMultiRound}.

(3) \nameHE is secure against collusion attacks of $\kappa<N-1$ clients. %under the RLWE assumption of parameter $(n,q, \chi_s,\chi_e)$.
We prove it by presenting that in the worst-case scenario, i.e., collusion occurs between $N-2$ clients (assume the colluded $N-2$ clients are the client 1 to $N-2$), 
the distribution of an arbitrary honest client's secrets
% the distribution of the secrets of an arbitrary honest client
\begin{equation*}
\footnotesize
    \begin{split}
       &\{(s_i,e_i,m_i), i\in\{N-1,N\}: \\
       &s_j\leftarrow\chi_s (j\in[N]),s_{N-1}+s_N = s - \sum_{k=1}^{N-2}s_k,\\
       &a\leftarrow U(\mathcal{R}_p),e_j\leftarrow\chi_e,
        \boldsymbol{c}_j\leftarrow \texttt{HE.Enc}(a,s_j, m_j), \\
        &\boldsymbol{C}\leftarrow\texttt{HE.Eval}\left(\{\boldsymbol{c}_j\}_{j\in[N]}\right),
        M\leftarrow\texttt{HE.Dec}(a,s,\boldsymbol{C}), \\
        &e_{N-1}+e_N =[\boldsymbol{c}_{N-1}+\boldsymbol{c}_N - a(s_{N-1}+s_N) - (M - \sum_{k=1}^{N-2}m_k)]_q|p,\\
       &m_{N-1}+m_N = M - \sum_{k=1}^{N-2}m_k\}
       % &s_{N-1}+s_N = s - \sum_{k=1}^{N-2}s_k, m_{N-1}+m_N = m_{add} - \sum_{k=1}^{N-2}m_k, \\
       % &e_{N-1}+e_N =[ C_{add} - \sum_{k=1}^{N-2} c_k - a(s_{N-1}+s_N) - (m_{N-1}+m_N)]_q|p\}
    \end{split}
\end{equation*}
is computationally indistinguishable from the distribution $\left(\chi_s,\chi_e,U(\mathcal{R}_p)\right)$. %for arbitrary
The proof of this result is presented in Section~\ref{ssec:secAna}: Security Analysis -- the proof of Theorem~\ref{the:SecCollusion}.

\textbf{Efficiency.}
The efficiency improvement of \nameHE mainly lies in three folds.
(1) \nameHE provides privacy protection but does not require secure channels to transmit ciphertexts and decryption results while existing SKHEs and MKHEs require secure channels; otherwise, their original plaintexts or decryption results will be leaked from their transmission messages.
(2) The decryption algorithm of \nameHE has only one-step decryption and requires no client interactions.
Thus, both computation and communication efficiencies of \nameHE are higher than that of existing MKHEs whose decryption algorithm contains two-step decryption and requires interactions between clients.
Besides, the newly added pseudorandom polynomial generation algorithm \texttt{HE.PRPG} does not bring extra communication burden for \nameHE, because each client generates the same pseudorandom polynomial $a^t$ locally without interaction.
The computational cost of generating $a^t$ can be ignored compared to the computational cost of encryption and decryption.
(3) Messages encoding and polynomials packing method is designed to further improve the efficiency of \nameHE in Section~\ref{ssec:encoding}.
We give theoretical complexity analyses and comparisons of the FL model using \nameHE and the FL models using the prior SKHEs and MKHEs in Section~\ref{ssec:compAna}: Complexity Analysis.

\subsubsection{Messages Encoding and Polynomials Packing}\label{ssec:encoding}
To enable \nameHE to support homomorphic operations over real numbers and reduce the number of polynomials to be encrypted, %and improve its computation and communication efficiencies,
we encode real numbers as integral coefficient polynomials using CKKS's encoding method and design a polynomial packing method to pack multiple encoded polynomials into a single plaintext polynomial for encrypting.
Messages encoding and polynomials packing mechanism for \nameHE consists of a tuple of four algorithms denoted as $\Pi_{EP}=\{\texttt{EP.Setup},\texttt{EP.EcdPack},\texttt{EP.DcdUnpk},\texttt{EP.Eval}\}$.
\begin{itemize}[leftmargin=0.3cm]
  \item[$\bullet$] $\texttt{EP.Setup}(\boldsymbol{PP},\log q_0,N) \rightarrow (T,pad)$:
      Setup algorithm \texttt{EP.Setup} takes as inputs the public parameters $\boldsymbol{PP}=\{1^{\lambda},p,q,\chi_s,\chi_e\}$ of \nameHE, the precision of messages $\log q_0$, and the number of clients $N$. %, and messages (fix-point numbers) precision $\epsilon$.
      It outputs a zero-padding number $pad\geq\left\lceil\log  N\right\rceil+1$ and a slot numbers $T\leq\left\lfloor \frac{\log p}{\log q_0+ pad} \right\rfloor$
      as the batch size of polynomials for packing.

  \item[$\bullet$] $\texttt{EP.EcdPack}(q_0,p,T,pad,\boldsymbol{g})\rightarrow m(X)$: Messages encoding and packing algorithm \texttt{EP.EcdPack} takes as inputs $q_0,p,T,pad$ and a message vector $\boldsymbol{g}\in\mathbb{R}^L$, where $L$ is the length of the vector.
      It outputs a packed plaintext polynomial set $\{m(X)\in \mathcal{R}_{p}\}$.
%      %
      An example of messages encoding and then packing is presented in Fig.~\ref{fig:packing}.
      Specifically, \texttt{EP.EcdPack} algorithm performs the following two algorithms.
      %The details of the encoding and the packing processes are as follows.
      %
      \begin{enumerate}[i), leftmargin=0.4cm]
        \item Messages encoding: This algorithm encodes each $n$ messages of the vector $\boldsymbol{g}$ as an $(n-1)$-degree polynomial in ring $\mathcal{R}_{q_0}$.
            %Each $n$ messages are chosen in sequence in $\boldsymbol{g}$.
            We pad $0$ at the end of $\boldsymbol{g}$ if there are more than 1 and less than $n$ messages left without being chosen. %the number of messages selected last time is less than $n$.
            We denote the $n$ messages chosen in the $i$th time as $\{g_{i,j}\}_{j\in[n]}$.
            The $n$ messages are first converted to a complex vector $\boldsymbol{z}_i\in\mathbb{C}^{n/2}$, then the complex vector is encoded as an $(n-1)$-degree integral coefficient polynomial
             \begin{equation}
               \begin{split}
                 z_i(X)=
                    & \left[\left\lfloor
                        p\cdot\phi^{-1}(\boldsymbol{z}_i)
                      \right\rceil\right]_{q_0} \\
                   =& \sum_{j=0}^{n-1}\alpha_{i,j}X^j,
                   \textrm{ for } i\in[T]
               \end{split}
             \end{equation}
             where $z_i(X)\in\mathcal{R}_{q_0}$ and $\phi: \mathbb{R}[X]/(X^n+1)\rightarrow \mathbb{C}^{n/2}$ is a ring isomorphism constructed in CKKS scheme.%, and $z_i \in \mathbb{C}$ is the $i$th complex number in the vector $\boldsymbol{z}$.

        \item Polynomials packing: This algorithm packs each $T-1$ encoded message polynomials $z_i(X)$ ($i\in [T-1]$) as well as a zero polynomial $\vartheta$ of degree $(n-1)$ into a single polynomial
          \begin{equation}
              m(X) = \sum_{j=0}^{n-1}\beta_{j}X^j,
          \end{equation}
          where $m(X)\in\mathcal{R}_p$, and $\beta_{j}$ is the $j$th coefficient of the packed polynomial $m(X)$. The binary representation of $\beta_{j}$ is
           \begin{align}\label{eq:packing}
            \textcolor[RGB]{255,0,0}{\beta_{j}} =\ &
            \Big(%\!\!\!
              \overbrace{
                \underbrace{0_{pad}\textcolor[RGB]{132,12,24}{\alpha_{1,j}}}_{\left(pad+\log q_0\right) \textrm{ bits}}^{\textrm{the } 1 \textrm{st slot}}
                \cdots
                \underbrace{0_{pad}\textcolor[RGB]{132,12,24}{\alpha_{T-1,j}}}_{\left(pad+\log q_0\right) \textrm{ bits}}^{\textrm{the } (T-1) \textrm{th slot}}
              \quad
              \underbrace{{0_{pad}\textcolor[RGB]{33,89,104}{\vartheta_j}}}_{\left(pad+\log q_0\right) \textrm{ bits}}^{\textrm{the } T \textrm{th slot}}
              }^{T \left(pad+\log q_0\right) \textrm{ bits}}
            \Big)_2,\notag\\
                   & \textrm{ for } j\in[0,n-1],
           \end{align}
      \end{enumerate}
      where $\vartheta_j=0$ is the $j$th coefficient of the zero polynomial $\vartheta$.

\begin{figure}[h]
  \centering
  % Requires \usepackage{graphicx}
  \includegraphics[width=3.5in]{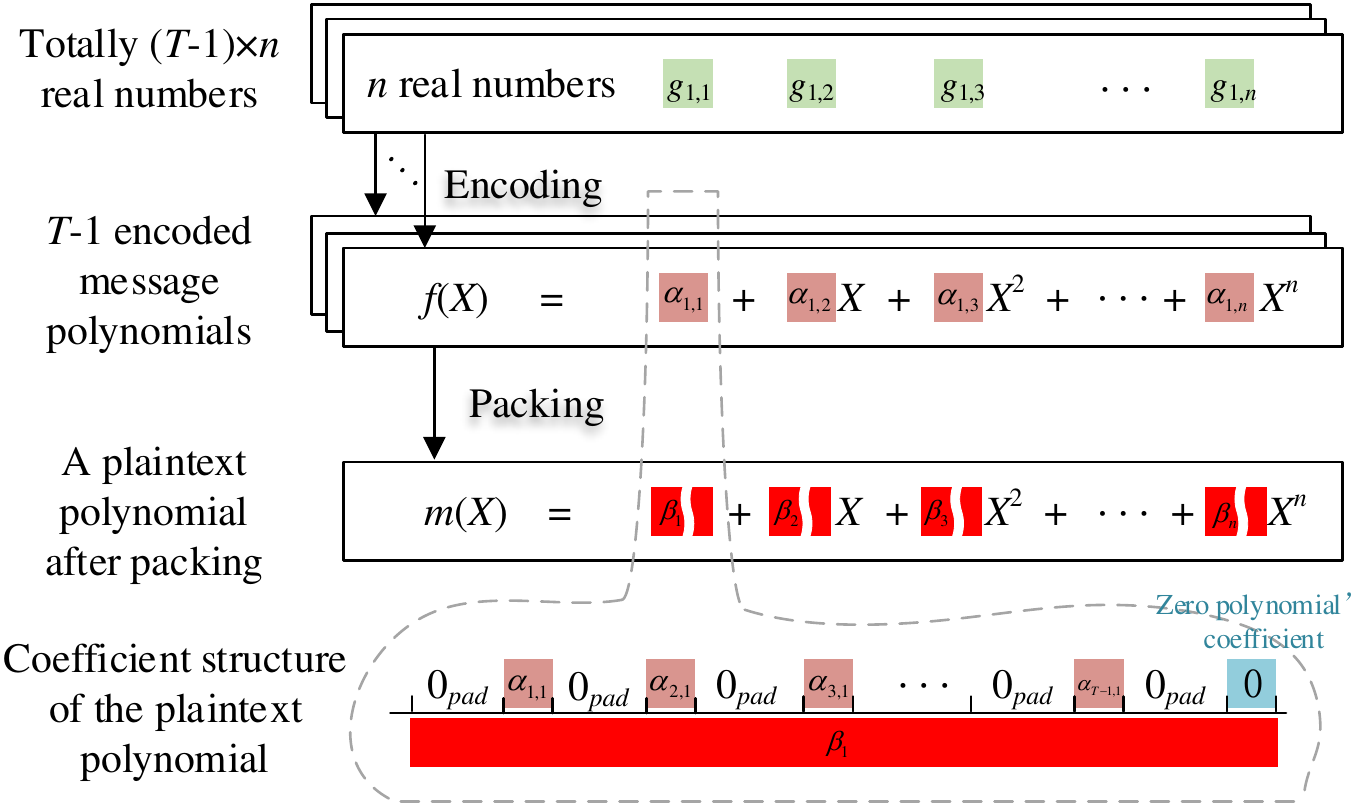}\\
  \caption{Messages encoding and polynomial packing. $(T-1)\times n$ real numbers are first encoded and then packed to form a plaintext polynomial.}\label{fig:packing}
\end{figure}

  \item[$\bullet$] $\texttt{EP.DcdUnpk}(m(X))\rightarrow \boldsymbol{g}'$:
     The unpacking and decoding algorithm \texttt{EP.DcdUnpk} is the reverse of \texttt{EP.EcdPack}.
     For each plaintext polynomial in the set $\{m(X)\in \mathcal{R}_{p}\}$, \texttt{EP.DcdUnpk} unpacks it as $T$ message polynomials $\{d_i(X)\in \mathcal{R}_{q_0\cdot 2^{pad}}\}_{i\in[T]}$, decodes the former $T-1$ message polynomials (the last message polynomial is the zero polynomial which is discarded) as $T-1$ complex vectors $\{\boldsymbol{d}_i\in \mathbb{C}^{n/2}\}_{i\in[T-1]}$, and rearranges the complex vectors as real messages.
      Specifically, \texttt{EP.DcdUnpk} unpacks and decodes each plaintext polynomial in the set $\{m(X)\}$ as follows.
      \begin{enumerate}[i), leftmargin=0.4cm]
        \item Polynomial unpacking: A polynomial $m(X)\in \mathcal{R}_{p}$ is unpacked as $T$ polynomials
            \begin{equation}
                d_i(X) = \sum_{j=0}^{n-1}\alpha_{i,j}X^j, \textrm{for } i\in[T]
            \end{equation}
             where $\alpha_{i,j}$ is calculated by shifting right $\beta_j$ (Eq.~\eqref{eq:packing}) by $(pad+\log q_0)(T+1-i)$ bits and intercepting the lowest $(pad+\log q_0)$ bits, i.e.,
            \begin{footnotesize}
            \begin{equation*}
                \alpha_{i,j}=
                \left[\beta_j >> (pad+\log q_0)(T+1-i)\right]_{\beta_j >> (pad+\log q_0)(T+2-i)},
            \end{equation*}
            \end{footnotesize}
            where ``$>>$" is a right shift operator that shifts the first operand right by the specified number (the second operand) of bits.

        \item Polynomial decoding: The $T-1$ polynomials $\{d_i(X)\}_{i\in[T-1]}$ are decoded as $T-1$ complex vectors %$\boldsymbol{z}_i\in \mathbb{C}^{n/2}$ $(i\in[T])$
              \begin{equation}
                \boldsymbol{d}_i = \phi(d_i(X))/p, \textrm{ for } i\in[T-1].
              \end{equation}
        After unpacking and decoding all the plaintext polynomials in $\{m(X)\}$, a message vector $\boldsymbol{g}'\approx \boldsymbol{g}$ can be obtained by rearranging all the complex vectors.
      \end{enumerate}

  \item[$\bullet$]
    $\texttt{EP.Eval}\left(\{m_i(X)\}_{i\in[N]} \rightarrow m_{\textrm{add}}(X)\right)$:
    Evaluation algorithm \texttt{EP.Eval} adds $N$ encoded and packed polynomials $\{m_i(X)\in \mathcal{R}_p\}_{i\in[N]}$ to an aggregation polynomials
    \begin{equation}
      m_{\textrm{add}}(X) = \sum_{i=1}^N m_i(X) = \sum_{i=1}^N
      \left( \sum_{j=0}^{n-1}\left(\beta_{i,j}X^j \right)\right) \in \mathcal{R}_p.
    \end{equation}
\end{itemize}

With the messages encoding and polynomials packing mechanism, a plaintext can contain up to $nT$ real numbers.

% \input{sections/4-MPHE}

% \begin{figure}[!t]
% \centering
% \includegraphics[width=3.5in]{distributedFramework1.pdf}
% \caption{\nameScheme model.}
% \label{fig:wholeModel}
% \end{figure}

\begin{fig*}[!ht]
\begin{itemize}[leftmargin=0.3cm]
  % \item[$\bullet$] \underline{AS Side (Initialization)}: On request from the $N$ clients, AS generates public parameters and secret keys.
  \item[$\bullet$] \underline{Initialization}: On request from the $N$ clients, AS generates public parameters and secret keys. Alternatively, without AS assumption, the clients jointly generate public parameters and secret keys using MPC techniques.
\begin{enumerate}[leftmargin=0.5cm]
  \item \textbf{Parameter generation.}
      \begin{enumerate}[-]
        \item Execute \texttt{HE.Setup} and \texttt{EP.Setup} algorithms to generate the public parameters $\boldsymbol{PP}=\{1^{\lambda},p,q,\chi_s,\chi_e\}$, a random polynomials $a^0\in\mathcal{R}_q$, a random secret parameter $B$, the padding number $pad$, and the slot number $T$.
        \item Generate an initial model parameter $\boldsymbol{w}^0$.
        \item Determine the aggregation parameters $\{\alpha_i=|D_i|\}_{i\in[N]}$.
        \item Send all the clients the above parameters.
      \end{enumerate}

  \item \textbf{Key generation.}
      \begin{enumerate}[-]
        \item Execute \texttt{HE.KeyGen} algorithm to generate $N$ encryption keys $\{s_i\leftarrow\chi_s\}_{i\in[N]}$.
        \item Calculate the decryption key $s=\left[\sum_{i=1}^N s_i\right]_q$.
        \item Send the client $i$ with the keys $s_i$ and $s$.
      \end{enumerate}
\end{enumerate}

\item[$\bullet$] \underline{Client side (Training \& Encryption)}: Each client $i$ $(i\in[N])$ locally trains its model and executes \texttt{EP.EcdPack} and \texttt{MR.REnc} algorithms to generate its local gradient ciphertext.

\begin{enumerate}[leftmargin=0.5cm]
  \item \textbf{Initializing.} Initialize the local model and the random polynomial in encryption as $\boldsymbol{w}_i^1\leftarrow\boldsymbol{w}^0$ and $a^1\leftarrow a^0$, respectively.

  \item \textbf{Training.}
    Randomly select a mini-batch samples from the local dataset and then train the local model $\boldsymbol{w}_i^t$ to generate the local gradient $\boldsymbol{g}_i^t$, where $t\geq 1$ represents the $t$th round of training. %iteration index.
    Weight the local gradient with $\alpha_i$ as $\boldsymbol{g}_i^t:=\alpha_i\boldsymbol{g}_i^t$.

  \item \textbf{Encoding and packing.}
      \begin{enumerate}[-]
        \item Normalize the weighted local gradient $\boldsymbol{g}_i^t$ into real numbers in range $[0, 1]$.
        \item Encode the normalized real numbers into integral coefficient polynomials in $R_{q_0}$ and pack each $T$ polynomials into one single polynomial in $\mathcal{R}_p$ using \texttt{EP.EcdPack} algorithm. The encoded and packed gradient is
            $$\boldsymbol{g}_i^t(X) \leftarrow \texttt{EP.EcdPack}(q_0,T,pad,\boldsymbol{g}_i^t).$$
      \end{enumerate}

  \item \textbf{Encryption.}
      Encrypt the encoded and packed gradient $\boldsymbol{g}_i^t(X)$ with \texttt{MR.REnc} algorithm to generate the local gradient ciphertext
            $$c_i^t\leftarrow\texttt{MR.REnc}(a^t, s_i,\boldsymbol{g}_i^t(X)).$$
        %according to Eq.~\eqref{eq:REncryption}.

  \item \textbf{Submission.} Submit the local gradient ciphertext $c_i^t$ to the cloud server.
  \item \textbf{Random polynomial generation.} During submission, generate a pseudo-random polynomial $a^{t+1}$ using \texttt{MR.PRPG} algorithm.
  \end{enumerate}

\item[$\bullet$] \underline{Cloud Side (Aggregation)}: After receiving all the $N$ clients' submissions, the cloud executes \texttt{HE.Eval} algorithm to generate an aggregation ciphertext.

\begin{enumerate}[leftmargin=0.5cm]
  \item \textbf{Aggregation.}
      Aggregate all the received local gradient ciphertexts as an aggregation ciphertext $C^t =\sum_{i=1}^N c_i^t.$

  \item \textbf{Transmission.}
      Transmit the aggregation ciphertext $C^t$ to the $N$ clients.
\end{enumerate}

\item[$\bullet$] \underline{Client Side (Decryption \& Model Update)}: Each client $i$ $(i\in[N])$ locally executes \texttt{EP.DcdUnpk} and \texttt{MR.RDec} algorithms to obtain the plaintext of the global gradient, and then updates its local model with the global gradient.

\begin{enumerate}[leftmargin=0.5cm]

  \item \textbf{Decryption.}
      Decrypt the aggregation ciphertext using \texttt{MR.RDec} algorithm to obtain the aggregation plaintext
        $$\boldsymbol{G'}^t\leftarrow \texttt{MR.RDec}(a^t, s, C^t).$$ %according to Eq.~\eqref{eq:decryption}.

  \item \textbf{Unpacking and decoding.}
      Unpack and decode the aggregation plaintext using \texttt{EP.DcdUnpk} algorithm to obtain the real parameter vector %a set of complex vectors
      $$\boldsymbol{G''}^t \leftarrow \texttt{EP.DcdUnpk}(\boldsymbol{G'}^t), j=1,2,\cdots$$

  \item \textbf{Parameter update.} Update the global model as $\boldsymbol{w}_i^{t+1}:=\boldsymbol{w}_i^t-\eta \boldsymbol{G''}^t/N$.

  \item \textbf{Training.} With the updated model parameter, each client continues to generate a local gradient for the next iteration until the final training model is obtained.
\end{enumerate}
\end{itemize}
 \setcounter{fig}{\value{figure}}
 \caption{Detailed description of the efficient and secure homomorphic encryption for federated learning (\nameScheme).}
 \label{fig:trainModel}
\end{fig*}

\section{\nameScheme}\label{sec:wholeModel}
We construct \nameScheme using \nameHE. The workflow of \nameScheme is summarized in Fig.~\ref{fig:trainModel}.

\textbf{Initialization.}
% This phase generates public parameters and keys of \nameHE and machine learning model.
% In this phase, public parameters of \nameHE and machine learning model are determined and the keys of \nameHE are generated.
In this phase, public parameters and keys of \nameHE and the machine learning model are generated.
This generation process can be performed by the trusted AS or by the clients themselves.
Specifically, (1) Performing by AS:
In response to the clients' request, AS generates public parameters and encryption keys/decryption keys. %AS generates model initialization parameters and encryption/decryption keys.
Then, AS sends all the initialization parameters to the clients and distributes different encryption keys and the same decryption key to the clients. %and distributes the clients with different encryption keys and the same decryption key.
(2) Performing by the clients:
Without the trusted AS, the clients generate their respective encryption keys $s_i$ locally and then jointly generate the decryption key $s=\left[\sum_{i=1}^N s_i\right]_q$ using existing MPC techniques (such as the MPC technique in \cite{ben2008fairplaymp}), and the clients generate and distribute public parameters by one elected client. %one arbitrarily selected client.

\textbf{Training and encryption.}
% each client initializes its local model,
After initialization, each client trains its local model with its dataset, executes encoding, packing, and encryption algorithms to generate a local gradient ciphertext, and transmits the ciphertext to the cloud.
Then, each client generates a pseudo-random polynomial using a pseudo-random function for encryption in the next training round.

\textbf{Aggregation.}
The cloud aggregates all the received ciphertexts from the $N$ clients and sends the aggregation result/a global gradient ciphertext to the clients.

\textbf{Decryption and model update.} On receiving the global gradient ciphertext, each client executes decryption, unpacking, and decoding algorithms with its decryption key to obtain the global gradient plaintext.
The client then updates its local model and continues to train its model for the next training round until the final model is obtained.
%
%More details of \nameScheme are provided in Fig.~\ref{fig:trainModel}.

\section{Theoretical Analysis}\label{sec:Analysis}
 In this section, we analyze the security and the complexity of \nameScheme.

\subsection{Security Analysis} \label{ssec:secAna}

The security of \nameScheme is ensured by \nameHE algorithm in which the clients' encryption/decryption keys are generated by a trusted party, the clients' submissions are encrypted using \nameHE, and aggregation results can only be decrypted locally by clients with the decryption key.
The security of \nameHE is ensured by the hardness of the RLWE assumption and the security/randomness of the pseudorandom generator \texttt{PRG} \cite{bonawitz2017practical}.
We next prove that \nameHE is IND-CPA secure in Theorem~\ref{the:IND-CPA}, \nameHE can provide multi-round privacy guarantees in Theorem~\ref{the:SecMultiRound}, and \nameHE is secure against collusion attacks of $\kappa<N-1$ clients in Theorem~\ref{the:SecCollusion}.

\begin{theorem}\label{the:IND-CPA}
    \nameHE is IND-CPA secure under the RLWE assumption of parameter $(n,q, \chi_s,\chi_e)$ and the security/randomness of \texttt{PRG}.
\end{theorem}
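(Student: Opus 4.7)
The plan is to establish IND-CPA security by a hybrid argument that reduces indistinguishability of ciphertexts to the RLWE assumption and the pseudorandomness of \texttt{PRG}. I would first instantiate the standard IND-CPA game: the adversary chooses two challenge messages $m^{(0)},m^{(1)}\in\mathcal{R}_p$, receives $\boldsymbol{c}^{*}=\texttt{HE.Enc}(a^t,s_i,m^{(b)})$ for a uniform bit $b$, and must guess $b$. It suffices to prove that $\boldsymbol{c}^{*}$ is computationally indistinguishable from a fresh uniform element of $\mathcal{R}_q$ independent of the chosen message; transitivity then yields indistinguishability of the two ciphertext distributions. This reduction matches the joint-distribution claim stated immediately after the theorem, namely that $(B,a^0,a^t,\boldsymbol{c}_i^t)$ is computationally close to uniform over $[0,1]^k\times\mathcal{R}_q^3$.

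The argument proceeds through three hybrids. In $\mathcal{H}_0$, the adversary's view is the real game: $a^0\leftarrow U(\mathcal{R}_q)$, $a^t\leftarrow \texttt{HE.PRPG}(t,B)$, and $\boldsymbol{c}^{*}=[a^t\cdot s_i+\tilde{p}\cdot e_i^t+m^{(b)}]_q$, with $B$, $s_i$, $e_i^t$ sampled as in \nameHE and held secret. In $\mathcal{H}_1$, I replace $a^t$ by a fresh $\tilde{a}\leftarrow U(\mathcal{R}_q)$. Since $B$ is uniform in $[0,1]^k$ and never exposed, the seed $(t\oplus B)$ is a uniform $k$-bit string, so pseudorandomness of \texttt{PRG} makes $\mathcal{H}_0$ and $\mathcal{H}_1$ computationally indistinguishable. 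In $\mathcal{H}_2$, I replace the masking term $[\tilde{a}\cdot s_i+\tilde{p}\cdot e_i^t]_q$ by an independent uniform $u\leftarrow U(\mathcal{R}_q)$, obtaining $\boldsymbol{c}^{*}=[u+m^{(b)}]_q$. This step is a direct reduction to RLWE with parameters $(n,q,\chi_s,\chi_e)$: given a challenge pair $(\tilde{a},b^{*})$ that is either an RLWE sample $b^{*}=[\tilde{a}\cdot s_i+\tilde{p}\cdot e_i^t]_q$ or uniform, set $\boldsymbol{c}^{*}=[b^{*}+m^{(b)}]_q$ and forward the adversary's output to the RLWE challenger. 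In $\mathcal{H}_2$ the ciphertext is uniform in $\mathcal{R}_q$ and reveals nothing about $b$, which closes the chain.

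The main obstacle I anticipate is the RLWE step, because the noise term is $\tilde{p}\cdot e_i^t$ rather than a raw sample from $\chi_e$. I would address this by invoking RLWE with respect to the transformed error distribution $\tilde{p}\cdot\chi_e$, noting that multiplication by the fixed public polynomial $\tilde{p}$ is a deterministic linear map that preserves the sub-Gaussian tail bound required for the standard RLWE-to-uniform reduction, provided the scaled noise still lies in the regime permitted by $\boldsymbol{PP}$. A secondary subtlety is that $a^0$ also appears in the adversary's view, but it is drawn uniformly at setup independently of everything else, so it can be simulated for free in every hybrid. Finally, extending pseudorandomness of the single target ciphertext to the full multi-query IND-CPA game is standard and follows from a further hybrid over encryption queries, which I would only sketch at the end.
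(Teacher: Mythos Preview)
Your proposal is correct and follows essentially the same two-step hybrid as the paper's proof: first replace $a^t$ by a uniform ring element using the security of \texttt{PRG} (your $\mathcal{H}_0\to\mathcal{H}_1$), then replace the ciphertext by uniform using RLWE (your $\mathcal{H}_1\to\mathcal{H}_2$). Your treatment is in fact more careful than the paper's, which invokes ``the RLWE problem (related to key $s_i$)'' without commenting on the scaled error $\tilde{p}\cdot e_i^t$ or the multi-query extension; your remarks on the transformed error distribution and the final hybrid over encryption queries fill gaps that the paper leaves implicit.
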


\begin{proof}
    Given a plaintext $m_i^t\in\mathcal{R}_p$, we define the distribution $\mathcal{D}=\{(B,a^0,a^t, \boldsymbol{c}_i^t)\}$ over $[0,1]^k\times\mathcal{R}_q^3$ as follows:\par
    \begin{enumerate}[(a)]
        \item $B\leftarrow U([0,1]^k),a^0\leftarrow U(\mathcal{R}_q),a^t\leftarrow\texttt{MR.PRPG}(t,B)=\texttt{Vec2Poly}\left(\texttt{PRG}\left((x_1 \oplus b_1)||\cdots|| (x_k\oplus b_k)\right)\right)$,
        \item $s_i\leftarrow\chi_s,e_i^t\leftarrow\chi_e,\boldsymbol{c}_i^t\leftarrow \texttt{MR.REnc}(a^t,s_i, m_i^t) = \left[a^t\cdot s_i+\tilde{p}\cdot e_i^t+m_i^t\right]_q$.
    \end{enumerate}

    We consider a distribution $\mathcal{D}'$ over $[0,1]^k\times\mathcal{R}_q^3$ which is obtained from $\mathcal{D}$ by changing its definition as
    \begin{enumerate}[(a)']
        \item $B\leftarrow U([0,1]^k),a^0\leftarrow U(\mathcal{R}_q),a^t\leftarrow U(\mathcal{R}_q)$.
        % \item $a^t\leftarrow U(\mathcal{R}_q)$.
        %\item $\boldsymbol{c}_i^t\leftarrow U(\mathcal{R}_q)$,
    \end{enumerate}
    In definition (a), \texttt{MR.PRPG} is secure and its output is computationally indistinguishable from a uniformly sampled element of its output space $\mathcal{R}_q$ because its underlying pseudorandom generator \texttt{PRG} is secure and pseudo-random \cite{bonawitz2017practical}.
    Thus, the distributions $\mathcal{D}$ and $\mathcal{D}'$ are computationally indistinguishable.

    We next modify the definition (b) as
    \begin{enumerate}[(b)']
        \item $\boldsymbol{c}_i^t\leftarrow U(\mathcal{R}_q)$,
    \end{enumerate}
which is computationally indistinguishable from $\mathcal{D}'$ since the RLWE problem (related to key $s_i$) with parameter $(n,q, \chi_s,\chi_e)$ is hard.

Since $U([0,1]^k\times\mathcal{R}_q^3)$ is independent of the given plaintext $m_i^t$, we conclude that \nameHE is IND-CPA secure.
\end{proof}

\begin{theorem}\label{the:SecMultiRound}
    \nameHE can provide multi-round privacy guarantees under the RLWE assumption of parameter $(n,q, \chi_s,\chi_e)$ and the randomness of the primary random element $a^t, (t=1,2,\cdots)$.
\end{theorem}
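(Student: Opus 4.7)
The plan is to mirror the hybrid argument used in the proof of Theorem~\ref{the:IND-CPA}, extending it to cover the derived quantities $\tilde{\boldsymbol{C}}^t$ and $\tilde{M}^t$. I will build a short chain of hybrids, starting from the real distribution, and at each transition I will replace one object with a truly uniform sample, paying either the \texttt{PRG} security of \texttt{HE.PRPG} or an RLWE challenge with parameter $(n,q,\chi_s,\chi_e)$. The final hybrid will then be identified with the uniform distribution over $\mathcal{R}_q^3 \times \mathcal{R}_p$, yielding the claimed indistinguishability.

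First, I would define a hybrid $\mathcal{D}_1$ by drawing $a^t$ and $a^{\tau}$ uniformly from $\mathcal{R}_q$ instead of from \texttt{HE.PRPG}; indistinguishability from the real distribution follows from the security of \texttt{PRG}, exactly as in the (a) $\to$ (a)' step of the proof of Theorem~\ref{the:IND-CPA}. Next, I would expand
\begin{equation*}
\tilde{\boldsymbol{C}}^t = \bigl[a^t(s - s_i) + a^{\tau} s_i + \tilde{p}(e^t - e_i^t + e_i^{\tau}) + (m_{\textrm{add}}^t - m_i^t + m_i^{\tau})\bigr]_q,
\end{equation*}
and invoke the RLWE assumption on the round-$\tau$ ciphertext $\boldsymbol{c}_i^{\tau}$, which is itself an RLWE sample under secret $s_i$ with freshly independent randomness $a^{\tau}$, to swap it for a uniform element $u \leftarrow U(\mathcal{R}_q)$. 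In the resulting hybrid, $\tilde{\boldsymbol{C}}^t = \boldsymbol{C}^t - \boldsymbol{c}_i^t + u$ is uniform in $\mathcal{R}_q$ and independent of $a^t \cdot s$, so $\tilde{M}^t = [[\tilde{\boldsymbol{C}}^t - a^t s]_q]_p$ is the mod-$p$ reduction of a uniform element of $\mathcal{R}_q$, which is uniform in $\mathcal{R}_p$ (using that the parameter setup chooses $q$ as a power of two together with $p \mid q$, or else a statistical-distance argument when $q/p$ is not integral).

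The hard part will be justifying the RLWE step despite the fact that the secret $s_i$ already appears in the adversary's round-$t$ view through $\boldsymbol{c}_i^t$ and $\boldsymbol{C}^t$. The route I would take is to fix the entire round-$t$ view first and then reduce: since $a^{\tau}$ is freshly uniform and independent of that view, and $e_i^{\tau}$ is an independent error sample, the round-$\tau$ ciphertext is a fresh RLWE sample under $s_i$, and many-sample RLWE remains hard even after polynomially many samples with the same secret are published. A second subtle point, which I would address by construction of the hybrid rather than by assumption, is that $\tilde{M}^t$ is a deterministic function of $(\tilde{\boldsymbol{C}}^t, a^t, s)$ in the real world; replacing $\tilde{\boldsymbol{C}}^t$ by a value uniform over $\mathcal{R}_q$ and independent of $s$ is precisely what breaks that functional dependence in the eyes of an adversary who does not hold $s$. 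Finally, the extension to aggregation ciphertexts spanning more than two rounds is immediate: one iterates the same substitution one round at a time, each replacement costing a single RLWE hop.
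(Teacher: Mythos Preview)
Your proposal is correct and follows essentially the same hybrid route as the paper: first swap the \texttt{HE.PRPG} outputs $a^t,a^{\tau}$ for uniform using the \texttt{PRG} security, then invoke RLWE to make $\tilde{\boldsymbol{C}}^t$ uniform, and finally read off that $\tilde{M}^t$ is uniform in $\mathcal{R}_p$. If anything you are more careful than the paper, which neither isolates the specific RLWE sample $\boldsymbol{c}_i^{\tau}$ being replaced, nor discusses the many-sample issue (that $s_i$ already appears in the round-$t$ view), nor the $p\mid q$ condition needed for the mod-$p$ reduction to be exactly uniform.
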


\begin{proof}
Denote the $t$th and the $\tau$th round plaintexts of $N$ clients as $\{m_k^t\}_{k\in[N]}\in \mathcal{R}_p^N$ and $\{m_k^{\tau}\}_{k\in[N]}\in \mathcal{R}_p^N$, respectively, where $t=1,2,\cdots$, $\tau=1,2,\cdots$, and $t\neq\tau$.
Given a spanning-rounds aggregation ciphertext $\tilde{\boldsymbol{C}}^t=[\boldsymbol{C}^t-\boldsymbol{c}_i^t+\boldsymbol{c}_i^{\tau}]_q(i\in[N])$,
we define the distribution $\mathcal{D}=\{(a^t,a^{\tau},\tilde{\boldsymbol{C}}^t,\tilde{M}^t)\}$ over $\mathcal{R}_q^3\times\mathcal{R}_p$ as follows:
    \begin{enumerate}[(a)]
        \item $a^t\leftarrow\texttt{MR.PRPG}(t,B),a^{\tau}\leftarrow\texttt{MR.PRPG}(\tau,B)$,

        \item $s_k\leftarrow\chi_s,s_i\leftarrow\chi_s,e^t_k\leftarrow\chi_e,e_i^{\tau}\leftarrow\chi_e$,\\
        $\tilde{\boldsymbol{C}}^t=[\boldsymbol{C}^t-\boldsymbol{c}_i^t+\boldsymbol{c}_i^{\tau}]_q \leftarrow [\texttt{MR.REnc}(a^{\tau},s_i, m_i^{\tau}) +\sum_{k=1,k\neq i}^N \texttt{MR.REnc}(a^t,s_k, m_k^t)]_q = [(a^{\tau}\cdot s_i+\tilde{p}\cdot e_i^{\tau}+m_i^{\tau})+(a^t\sum_{k=1,k\neq i}^Ns_k+\tilde{p}\sum_{k=1,k\neq i}^Ne_k^t+\sum_{k=1,k\neq i}^Nm_k^t)]_q$, %i\in[N]$

        \item $\tilde{M}^t \leftarrow \texttt{MR.RDec}(a^t, s,\tilde{\boldsymbol{C}}^t) = [[\tilde{\boldsymbol{C}}^t-a^t\cdot s]_q]_p$.
    \end{enumerate}

    We consider a distribution $\mathcal{D}'$ over $\mathcal{R}_q^3\times\mathcal{R}_p$ which is obtained from $\mathcal{D}$ by changing its definition as
    \begin{enumerate}[(a)']
        \item $a^t\leftarrow U(\mathcal{R}_q), a^{\tau}\leftarrow U(\mathcal{R}_q)$,
        \item $\tilde{\boldsymbol{C}}^t\leftarrow U(\mathcal{R}_q)$.
    \end{enumerate}
    The distributions $\mathcal{D}$ and $\mathcal{D}'$ are computationally indistinguishable because of
    the randomness of the primary random element $a^t$ and $a^{\tau}$ in definition (a) and the randomness of the RLWE samples in definition (b).

    We then modify definition (c) as
    \begin{enumerate}[(c)']
        \item $\tilde{M}^t \leftarrow U(\mathcal{R}_p)$.
    \end{enumerate}
    In definition (c), we have
    \begin{equation*}
    \footnotesize
        \begin{split}
              & \tilde{M}^t \leftarrow \texttt{MR.RDec}(a^t, s,\tilde{\boldsymbol{C}}^t)
            = \left[\left[\tilde{\boldsymbol{C}}^t-a^t\cdot s\right]_q\right]_p \\
%            = &\ [[[a^t\sum_{k=1,k\neq i}^Ns_k+\tilde{p}\sum_{k=1,k\neq i}^Ne_k^t+\sum_{k=1,k\neq i}^Nm_k^t]_q\\
%              &+ \left[a^{\tau}\cdot s_i+\tilde{p}\cdot e_i^{\tau}+m_i^{\tau}\right]_q -a^t\cdot s]_q]_p   \\
            = & \left[\left[(a^t+a^{\tau})s_i + \tilde{p}\left(\sum_{k=1,k\neq i}^Ne_k^t+e_i^{\tau}\right) + \sum_{k=1,k\neq i}^Nm_k^t+m_i^{\tau} \right]_q\right]_p, % \triangleq & \left[h_i\right]_p,\\
        \end{split}
    \end{equation*}
    where ${\scriptstyle \left[(a^t+a^{\tau})s_i + \tilde{p}\left(\sum_{k=1,k\neq i}^Ne_k^t+e_i^{\tau}\right) + \sum_{k=1,k\neq i}^Nm_k^t+m_i^{\tau} \right]_q}$ is computationally indistinguishable from a random sample of $\mathcal{R}_q$ due to the hardness of the RLWE problem.
    Thus $\tilde{M}^t$ is computationally indistinguishable from a random sample of $\mathcal{R}_p$ and the distributions $\mathcal{D}$ and $\mathcal{D}'$ are computationally indistinguishable.

Since $U(\mathcal{R}_q^3\times\mathcal{R}_p)$ is independent of the given spanning-rounds aggregation ciphertext and the corresponding $N$ original plaintexts in two rounds, i.e., the $t$th and the $\tau$th round, \nameHE can provide privacy guarantees of two-round encryption.
Similarly,  it can be proven that $U(\mathcal{R}_q^3\times\mathcal{R}_p)$ is independent of the given spanning-rounds aggregation ciphertext involving arbitrary multiple rounds.
Therefore, we conclude that \nameHE can provide multi-round privacy guarantees.
\end{proof}

\begin{theorem}\label{the:sumOfRandomVar}
  Let $X$ and $Y$ be independent random variables that are normally distributed, then their sum is also normally distributed. i.e., if $X\sim N(\mu_X,\sigma^2_X)$ and $Y\sim N(\mu_Y,\sigma^2_Y)$ then $Z\sim N(\mu_X+\mu_Y,\sigma^2_X+\sigma^2_Y)$, where $Z=X+Y$ is the sum of the two variables and $N(\mu,\sigma^2)$ represents normal (Gaussian) distribution with mean $\mu$ and variance $\sigma^2$.
\end{theorem}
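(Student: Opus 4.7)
The plan is to prove the result via characteristic functions, which is the cleanest route and avoids the tedious bookkeeping of a direct convolution. Recall that for a random variable $X \sim N(\mu_X,\sigma_X^2)$, the characteristic function is
\[
\varphi_X(t) = \mathbb{E}\bigl[e^{itX}\bigr] = \exp\!\Bigl(i\mu_X t - \tfrac{1}{2}\sigma_X^2 t^2\Bigr).
\]
I would take this formula as known (it follows from completing the square in the Gaussian density integral, and is a standard textbook fact), and similarly for $Y$.

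The key step is to exploit independence: since $X$ and $Y$ are independent, the characteristic function of the sum $Z = X+Y$ factors as
\[
\varphi_Z(t) = \mathbb{E}\bigl[e^{it(X+Y)}\bigr] = \mathbb{E}\bigl[e^{itX}\bigr]\,\mathbb{E}\bigl[e^{itY}\bigr] = \varphi_X(t)\varphi_Y(t).
\]
Multiplying the two Gaussian characteristic functions and combining the exponents yields
\[
\varphi_Z(t) = \exp\!\Bigl(i(\mu_X+\mu_Y)t - \tfrac{1}{2}(\sigma_X^2+\sigma_Y^2)t^2\Bigr),
\]
which is precisely the characteristic function of a normal random variable with mean $\mu_X+\mu_Y$ and variance $\sigma_X^2+\sigma_Y^2$. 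Invoking the uniqueness theorem for characteristic functions (a one-to-one correspondence between distributions and their characteristic functions) concludes that $Z \sim N(\mu_X+\mu_Y,\sigma_X^2+\sigma_Y^2)$.

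There is no real obstacle in this proof; the only subtlety is deciding which standard facts to cite versus derive. I would cite (i) the explicit form of the Gaussian characteristic function, (ii) the factorization of characteristic functions under independence, and (iii) the uniqueness theorem, since all three are classical. If the reader prefers a self-contained argument, the alternative is to write $f_Z(z) = \int_{-\infty}^{\infty} f_X(x) f_Y(z-x)\,dx$, substitute the two Gaussian densities, complete the square in $x$, and recognize the remaining integral as a Gaussian of the correct shape; this works but is computationally heavier and offers no additional insight. Given that this theorem is used only as an auxiliary tool in the subsequent noise analysis of \nameHE, the short characteristic-function argument is sufficient.
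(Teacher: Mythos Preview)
Your characteristic-function argument is correct and complete. The paper itself does not prove this theorem at all; it simply states the result and cites an external reference (Wikipedia) for the proof, so there is no in-paper argument to compare against. Your proposal is therefore more than adequate, and in fact supplies more detail than the paper does.
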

The proof of Theorem~\ref{the:sumOfRandomVar} can be seen in \cite{wikipediaGaussian}.

\begin{theorem}\label{the:randomnessOfCondition}
Let $X_1, X_2,\cdots$ be independent random variables that obey normal (Gaussian) distribution $N(\mu,\sigma^2)$.
%and $Y$ be the sum of the variables $X_i$, i.e., $Y=\sum X_i$.
Denote the values of the random variables $X_1,X_2,\cdots$ as $x_1,x_2,\cdots$, respectively.
Then, for all given constant values $y=x_1+x_2+\cdots$, the possible value of $x_i$ follows Gaussian distribution.
%Then, conditional probability density functions $\{f_{X_i|Y}(x_i|y)\}_{i\in[n]}$ obey Gaussian distribution.
\end{theorem}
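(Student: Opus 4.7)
The plan is to reduce the claim to the standard fact that conditionals inside a jointly Gaussian vector are themselves Gaussian. I will first clarify that the statement should be read as: conditioning on the event $\{S = y\}$ where $S = X_1 + \cdots + X_n$ for a finite index set $\{1,\ldots,n\}$, the conditional law of $X_i$ is Gaussian. The essential ingredients are already in the paper, in particular the sum-of-normals result of Theorem~\ref{the:sumOfRandomVar}.

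First, I would establish joint Gaussianity of the pair $(X_i, S)$. Set $T_i := S - X_i = \sum_{j \neq i} X_j$. Since the $X_j$ are independent, $X_i$ and $T_i$ are independent, and by repeated application of Theorem~\ref{the:sumOfRandomVar}, $T_i \sim N((n-1)\mu,(n-1)\sigma^2)$. Hence $(X_i, T_i)$ has independent Gaussian coordinates, and the map $(u,v) \mapsto (u, u+v)$ is a linear bijection, so $(X_i, S) = (X_i, X_i + T_i)$ is jointly Gaussian with mean $(\mu, n\mu)$ and covariance matrix
\[
\Sigma \;=\; \begin{pmatrix} \sigma^2 & \sigma^2 \\ \sigma^2 & n\sigma^2 \end{pmatrix}.
\]

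Second, I would apply the standard conditional-distribution formula for a bivariate Gaussian: if $(U,V)$ is jointly Gaussian then
\[
U \mid V = v \;\sim\; N\!\left(\mu_U + \tfrac{\mathrm{Cov}(U,V)}{\mathrm{Var}(V)}(v - \mu_V),\; \mathrm{Var}(U) - \tfrac{\mathrm{Cov}(U,V)^2}{\mathrm{Var}(V)}\right).
\]
Plugging in the entries of $\Sigma$ yields
\[
X_i \mid S = y \;\sim\; N\!\left(\tfrac{y}{n},\; \tfrac{n-1}{n}\sigma^2\right),
\]
which is Gaussian, as claimed. Alternatively, one could obtain the same conclusion by writing the joint density $\prod_j \phi_{\mu,\sigma^2}(x_j)$, completing the square in the exponent after substituting $x_n = y - \sum_{j<n} x_j$, and reading off that the resulting conditional marginal of $x_i$ is Gaussian; but the jointly-Gaussian route is cleaner.

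I do not anticipate a serious obstacle: the argument is a textbook consequence of Theorem~\ref{the:sumOfRandomVar} plus the closure of Gaussian vectors under linear transformation and conditioning. The only point requiring a little care is interpretational, namely that the theorem is implicitly a finite-sample statement and that ``possible value of $x_i$'' must be read as the conditional law of $X_i$ given $\{X_1 + \cdots + X_n = y\}$; once this is made precise, the proof is short.
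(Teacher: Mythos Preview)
Your proposal is correct, but the paper takes a different route. The paper argues by mathematical induction on $n$: in the base case $n=2$ it forms the joint density $f(x_1,y)=f_{X_1}(x_1)f_{X_2}(y-x_1)$, divides by $f_Y(y)$, and simplifies the exponent by hand to recognize a Gaussian in $x_1$; in the induction step it writes $f(x_{k+1},y)=f_{X_{k+1}}(x_{k+1})\,f(x_k,y-x_{k+1})$, invokes the induction hypothesis on the second factor, and again completes the square to extract a Gaussian in $x_{k+1}$. Your approach instead exploits the structural fact that $(X_i,S)$ is jointly Gaussian (via closure under linear maps) and then applies the bivariate conditioning formula once, obtaining the explicit parameters $X_i\mid S=y\sim N(y/n,\tfrac{n-1}{n}\sigma^2)$ directly. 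Your argument is shorter, avoids the somewhat delicate bookkeeping in the paper's inductive density manipulations, and yields the conditional mean and variance in closed form for all $n$ at once; the paper's computation, by contrast, produces recursive expressions for $\mu_{k+1},\sigma_{k+1}^2$ that are harder to read (and the stray ``scaling factor'' $A_k$ in the paper's write-up would in fact have to equal $1$ since conditional densities integrate to one). Both arguments ultimately rest on Theorem~\ref{the:sumOfRandomVar}.
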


% We prove Theorem~\ref{the:randomnessOfCondition} by mathematical induction as follows.

\begin{proof}
  Let $Y=\sum_{i=1}^n X_i$.
  We give a proof by mathematical induction on the natural number $n$.\par
  The probability density function of the variable $X_i$ is $f_{X_i}(x_i)=\frac{1}{\sqrt{2\pi}\sigma}\exp\left(-\frac{(x_i-\mu)^2}{2\sigma^2}\right)$.
  %Denote the probability density function of the variable $X_i$ as $f_{X_i}(x_i)$ $(i\in[n])$.
  According to Theorem~\ref{the:sumOfRandomVar}, the probability density function of $Y$ is $f_{Y}(y)=\frac{1}{\sqrt{2n\pi}\sigma}\exp\left(-\frac{(x_i-n\mu)^2}{2n\sigma^2}\right)$.
  For all given constant value $y$, the probability density function for possible values of $x_i$ is $f_{X_i|Y}(x_i|y)$ which is the conditional probability density function of $X_i$ given $Y$.\\
  \textit{Base case:} Show that Theorem~\ref{the:randomnessOfCondition} holds for $n=2$.\par
  When $n=2$, we have $Y=X_1+X_2$. %Show that the theorem holds for $n=2$.
  The joint probability density function of $X_1$ and $Y$ is $f(x_1,y)=f_{X_1}(x_1)\cdot f_{X_2}(y-x_1)$.
  %The conditional probability density functions of $X_i$ on condition $Y$ is $f_{X_i|Y}(x_i|y)=\frac{f(x_i,y)}{f_Y{y}}$.
  Then, the conditional probability density function of $X_1$ given $Y$ is
\begin{equation}
%\footnotesize
\begin{split}
  &\ f_{X_1|Y}(x_1|y)=\frac{f(x_1,y)}{f_Y(y)}=\frac{f_{X_1}(x_1)\cdot f_{X_2}(y-x_1)}{f_Y(y)}\\
  &=\frac{\frac{1}{2\pi\sigma^2} \exp\left(-\frac{(x_1-\mu)^2+(y-x_1-\mu)^2}{2\sigma^2}\right)}{\frac{1}{2\sqrt{\pi}\sigma}\exp\left(-\frac{(y-2\mu)^2}{4\sigma^2}\right)}\\
  %&=\frac{1}{\sqrt{\pi}\sigma}\exp\left(-\frac{(x_1-\frac{y}{2})^2+\frac{y^2}{2}}{\sigma^2}\right)\\
  %&=A_1\cdot\frac{1}{\sqrt{2\pi}\sqrt{\frac{\sigma^2}{2}}}\exp\left(-\frac{(x_1-\frac{y}{2})^2}{2\times\frac{\sigma^2}{2}}\right)\\
  &\triangleq A_1\cdot \frac{1}{\sqrt{2\pi}\sigma_1}\exp\left(-\frac{(x_1-\mu_1)^2}{2\sigma_1^2}\right),
\end{split}
\end{equation}
where $A_1=\exp\left(-\frac{y^2}{2\sigma^2}\right),\mu_1=\frac{y}{2}, \sigma_1^2=\frac{\sigma^2}{2}$.
For a fixed $y$, $f_{X_1|Y}(x_1|y)$ can be seen as the result of multiplying the probability density function of Gaussian distribution $N(\mu_1,\sigma_1^2)$ by the scaling factor $A_1$.
That is to say, for all given constant $y=x_1+x_2$, the possible value of $x_i (i=1,2)$ follows Gaussian distribution.\\
\textit{Induction step:} Show that for every $k\geq 2$ and $y=\sum_{i=1}^k x_i$, if Theorem~\ref{the:randomnessOfCondition} holds, then for $y=\sum_{i=1}^{k+1} x_i$, Theorem~\ref{the:randomnessOfCondition} also holds.\par
Assume the induction hypothesis that for a particular $k$, the single case $n=k$ holds, i.e.,
for all fixed $y$, $f_{X_{k}|Y}(x_{k}|y)$ can be seen as the result of multiplying the probability density function of Gaussian distribution $N(\mu_k,\sigma_k^2)$ by the scaling factor $A_k$.

When $n=k+1$, we have $Y=X_1+\cdots+X_{k+1}$. The joint probability density function of $X_{k+1}$ and $Y$ is $f(x_{k+1},y)=f_{X_{k+1}}(x_{k+1})f(x_{k},y-x_{k+1})$.
Then, the conditional probability density function of $X_{k+1}$ given $Y$ is
\begin{small}
\begin{align*}
% \footnotesize%\small
% \tiny
  f_{X_{k+1}|Y}&(x_{k+1}|y)=\frac{f(x_{k+1},y)}{f_Y(y)}
  = \frac{f_{X_{k+1}}(x_{k+1})\cdot f(x_{k},y-x_{k+1})}{f_Y(y)}\\
  =&\ \frac{f_{X_{k+1}}(x_{k+1})\cdot f_{X_{k}|Y}(x_{k}|(y-x_{k+1}))\cdot f_Y(y-x_{k+1})}{f_Y(y)}\\
  =&\ \frac{A_{k}}{2\pi\sigma\sigma_k} \exp\left(-\frac{(x_{k+1}-\mu)^2}{2\sigma^2}-\frac{(y-x_{k+1}-\mu_k)^2}{2\sigma_k^2}\right)\\
  &\ \cdot \exp\left(-\frac{x_{k+1}^2-2yx_{k+1}+4\mu x_{k+1}}{4\sigma^2}\right)\\
  \triangleq &\ A_{k+1}\cdot\frac{1}{\sqrt{2\pi}\cdot \sigma_{k+1}}\exp\left(-\frac{\left(x_{k+1}-\mu_{k+1}\right)^2}{2\sigma_{k+1}^2}\right),
\end{align*}
\end{small}
where ${\scriptstyle A_{k+1}=\frac{A_{k}\cdot\exp\left(-\frac{2\sigma_k^2\mu^2+2\sigma^2(\mu_k^2+y^2-2y\mu_k)-(\sigma_k^2y+2\sigma^2y-2\sigma^2\mu_k)^2}{4\sigma^2\sigma_k^2}\right)}{\sqrt{\pi(3\sigma_k^2+2\sigma^2)}}}$,
% where ${\scriptscriptstyle A_{k+1}=\frac{A_{k}}{\sqrt{\pi(3\sigma_k^2+2\sigma^2)}}\cdot\exp\left(-\frac{2\sigma_k^2\mu^2+2\sigma^2(\mu_k^2+y^2-2y\mu_k)-(\sigma_k^2y+2\sigma^2y-2\sigma^2\mu_k)^2}{4\sigma^2\sigma_k^2}\right)}$
% where $A_{k+1}=\frac{A_{k}}{\sqrt{\pi(3\sigma_k^2+2\sigma^2)}}\cdot$ $\exp\left(-\frac{2\sigma_k^2\mu^2+2\sigma^2(\mu_k^2+y^2-2y\mu_k)-(\sigma_k^2y+2\sigma^2y-2\sigma^2\mu_k)^2}{4\sigma^2\sigma_k^2}\right)$,
${\scriptstyle\mu_{k+1}=\frac{\sigma_k^2y+2\sigma^2y-2\sigma^2\mu_k}{3\sigma_k^2+2\sigma^2}}$, and  ${\scriptstyle\sigma_{k+1}^2=\frac{2\sigma^2\sigma_k^2}{3\sigma_k^2+2\sigma^2}}$.\par
For a fixed $y$, $f_{X_{k+1}|Y}(x_{k+1}|y)$ can be seen as the result of multiplying the probability density function of Gaussian distribution $N(\mu_{k+1},\sigma_{k+1}^2)$ by the scaling factor $A_{k+1}$.
%
%For a fixed $y$, the probability density function $f_{X_{k+1}|Y}(x_{k+1}|y)$ is the probability density function of Gaussian distribution $N(\mu_{k+1},\sigma_{k+1}^2)$ with a constant scale factor $A_{k+1}$.
Thus, for all given constant $y=x_1+\cdots+x_{k+1}$, the possible value of $x_{k+1}$ follows Gaussian distribution.
That is, when $n=k+1$, Theorem~\ref{the:randomnessOfCondition} also holds true, establishing the induction step.\\
\textit{Conclusion:} Since both the base case and the induction step have been proven as true, by mathematical induction Theorem~\ref{the:randomnessOfCondition} holds for every natural number $n$.
\end{proof}

\begin{theorem}\label{the:SecCollusion}
    \nameHE is secure against collusion attacks of $\kappa<N-1$ clients. %under the RLWE assumption of parameter $(n,q, \chi_s,\chi_e)$.
    % \nameHE can ensure the security of privacy-sensitive information against collusion attacks of $\kappa<N-1$ clients under the RLWE assumption of parameter $(n,q, \chi_s,\chi_e)$.
\end{theorem}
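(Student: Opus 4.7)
The plan is to prove Theorem~\ref{the:SecCollusion} by a hybrid argument that reduces the worst-case collusion (the colluding coalition consists of clients $1,\ldots,N-2$, leaving clients $N-1$ and $N$ honest) to the IND-CPA guarantee of Theorem~\ref{the:IND-CPA} combined with the conditional-Gaussian fact of Theorem~\ref{the:randomnessOfCondition}. The target is to show that the joint distribution of the honest secrets $(s_{N-1},e_{N-1},m_{N-1})$ and $(s_N,e_N,m_N)$, conditioned on the coalition's entire view, is computationally indistinguishable from the product prior $\chi_s\times\chi_e\times U(\mathcal{R}_p)$ restricted only to the three sum relations that any coalition member can unavoidably derive from its view: the decryption-key sum $s_{N-1}+s_N=s-\sum_{k=1}^{N-2}s_k$, the aggregate-plaintext sum $m_{N-1}+m_N=M-\sum_{k=1}^{N-2}m_k$, and the residual error sum obtained by substituting these two into the ciphertext equation.

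First, I would handle the $(s_{N-1},e_{N-1})$ and $(s_N,e_N)$ components by a purely statistical argument. Because $\chi_s$ and $\chi_e$ are coordinate-wise (discrete) Gaussian distributions over $\mathcal{R}$, Theorem~\ref{the:randomnessOfCondition} applies coefficient-by-coefficient: the conditional distribution of $s_{N-1}$ given the revealed value of $s_{N-1}+s_N$ is again a Gaussian of the original family centered at half the sum, and the same holds for $e_{N-1}$ given $e_{N-1}+e_N$. Thus, beyond the two sum constraints, no additional statistical information about the individual honest secret keys or errors leaks, and the relevant marginals are indistinguishable from $\chi_s\times\chi_e$.

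Second, I would handle the message component $(m_{N-1},m_N)$ by a two-step hybrid. In each hop I replace an honest client's ciphertext $\boldsymbol{c}_i$ ($i\in\{N-1,N\}$) with a uniformly random element of $\mathcal{R}_q$; indistinguishability of each hop is justified by Theorem~\ref{the:IND-CPA}, instantiated with the honest client's key playing the role of the unknown RLWE secret while the coalition's keys $\{s_k\}_{k\in[N-2]}$, the shared primary element $a^t$, and the coalition's messages are simulated openly. After both hops, the coalition's view depends on $(m_{N-1},m_N)$ only through the aggregate sum $m_{N-1}+m_N$ recovered via $M$, so the individual messages are information-theoretically hidden given their sum, and their conditional marginal is $U(\mathcal{R}_p)$ as claimed.

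The main obstacle is the RLWE reduction inside the second step, because the coalition already knows the decryption key $s$ and hence the sum $s_{N-1}+s_N$, so a naive embedding of an independent RLWE secret into $s_{N-1}$ would force an inconsistent $s_N$. I would resolve this by a constrained embedding: sample $s_{N-1}$ from $\chi_s$ to serve as the RLWE secret, and define $s_N$ as the residual $s-\sum_{k=1}^{N-2}s_k-s_{N-1}$ so that the public sum is preserved by construction; the ciphertext $\boldsymbol{c}_{N-1}$ is taken directly from the RLWE challenge, while $\boldsymbol{c}_N$ is simulated honestly using the now-fixed $s_N$. By Theorem~\ref{the:randomnessOfCondition}, the residual $s_N$ is statistically close to a $\chi_s$-sample conditioned on the known sum, so the simulation is faithful and the distinguishing advantage is bounded directly by the RLWE advantage. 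An analogous symmetric hop handles $\boldsymbol{c}_N$, after which the remainder of the proof is the information-theoretic argument sketched above.
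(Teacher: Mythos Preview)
Your overall strategy coincides with the paper's: both reduce to the worst case of $N-2$ colluding clients, invoke Theorem~\ref{the:randomnessOfCondition} to show that the honest keys and errors remain (conditionally) Gaussian given their revealed sums, and appeal to RLWE hardness to argue that the individual honest messages are hidden beyond their sum. The paper's treatment of the message step is in fact less explicit than yours---it simply asserts that $m_i\leftarrow U(\mathcal{R}_p)$ ``because of the hardness of the RLWE assumption'' without writing out any hybrid or reduction---so your two-hop hybrid is a refinement of the paper's argument rather than a departure from it.

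That said, your constrained embedding as stated does not close. If $s_{N-1}$ is the RLWE challenger's (unknown) secret, then the residual $s_N=s'-s_{N-1}$ is equally unknown to the reduction, so you cannot ``simulate $\boldsymbol{c}_N$ honestly using the now-fixed $s_N$'', nor can you hand the coalition a consistent decryption key $s=\sum_k s_k+s_{N-1}+s_N$. The paper sidesteps this by not attempting an explicit reduction at all. If you want to make the step rigorous you must arrange the embedding so that the publicly known sum $s'$ is fixed \emph{before} the RLWE secret is drawn, which forces the secret (and error) to be sampled from the conditional Gaussians $\chi_s',\chi_e'$ supplied by Theorem~\ref{the:randomnessOfCondition} rather than from the original $\chi_s,\chi_e$. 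This is exactly why the paper's ideal distribution $\mathcal{D}'$ is stated over $(\chi_s',\chi_e',U(\mathcal{R}_p))$ and not over $(\chi_s,\chi_e,U(\mathcal{R}_p))$: the underlying hardness assumption one ultimately relies on is RLWE with those conditional parameters.
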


\begin{proof}
    Given the $N-2$ colluded clients' plaintexts $\{m_k\}_{k\in[N-2]}\in\mathcal{R}_p^{N-2}$ and the decryption key $s\in\mathcal{R}$,
    % their encryption key $\{s_k\}_{k\in[N-2]}\in\mathcal{R}^{N-2}$,
    % the decryption key $s\in\mathcal{R}$,
    % and the aggregation plaintext $m_{add}\in\mathcal{R}_p$ of all $N$ clients' plaintexts,
    % and the aggregation plaintext $m_{add}\in\mathcal{R}_p$ of the $N$ clients, %$m_{add}=\sum_{k=1}^N m_k\in\mathcal{R}_p$,
    we define the distribution $\mathcal{D}=\{(s_i,e_i,m_i), i\in\{N-1,N\}\}$ over $\mathcal{R}^2\times\mathcal{R}_p$ as follows:\par
    \begin{enumerate}[(a)]
        % \item $e_k\leftarrow\chi_e (k\in[N-2]), e_{N-1}+e_N =[ C_{add} - \sum_{k=1}^{N-2} c_k - a(s_{N-1}+s_N) - (m_{N-1}+m_N)]_q/p\triangleq e'$,
        % \item $s_k\leftarrow\chi_s (k\in[N-2]), s_{N-1}+s_N = s - \sum_{k=1}^{N-2}s_k\triangleq s'$,
        \item $s_j\leftarrow\chi_s (j\in[N]),s_{N-1}+s_N = s - \sum_{k=1}^{N-2}s_k\triangleq s'$,
        \item $a\leftarrow U(\mathcal{R}_p),e_j\leftarrow\chi_e$, \\
        $\boldsymbol{c}_j\leftarrow \texttt{MR.REnc}(a,s_j, m_j) = \left[a\cdot s_j+\tilde{p}\cdot e_j+m_j\right]_q$, \\
        $\boldsymbol{C}\leftarrow\texttt{MR.REval}\left(\{\boldsymbol{c}_j\}_{j\in[N]}\right)=\sum_{j=1}^N\boldsymbol{c}_j (j\in[N])$,\\
        $M\leftarrow\texttt{MR.RDec}(a,s,\boldsymbol{C})=[[\boldsymbol{C}-a\cdot s]_q]_p$,
        $e_{N-1}+e_N =[\boldsymbol{c}_{N-1}+\boldsymbol{c}_N - a(s_{N-1}+s_N) - (M - \sum_{k=1}^{N-2}m_k)]_q|p\triangleq e'$,
        \item $m_{N-1}+m_N = M - \sum_{k=1}^{N-2}m_k$.
        % \item $m_k\leftarrow U(\mathcal{R}_p) (k\in[N-2]), m_{N-1}+m_N = m_{add} - \sum_{k=1}^{N-2}m_k\triangleq m'$.
    \end{enumerate}

    We consider a distribution $\mathcal{D}'$ over $\mathcal{R}^2\times\mathcal{R}_p$ which is obtained from $\mathcal{D}$ by changing its definition as
    \begin{enumerate}[(a)']
        % \item $e_i\leftarrow\chi_e (i\in\{N-1,N\})$,
        \item $s_i\leftarrow\chi_s' (i\in\{N-1,N\})$,
        \item $e_i\leftarrow\chi_e' (i\in\{N-1,N\})$,
        \item $m_i\leftarrow U(\mathcal{R}_p) (i\in\{N-1,N\})$.
    \end{enumerate}
    In definitions (a) and (b), according to Theorem~\ref{the:randomnessOfCondition},
    for a fixed $s'$ and a fixed $e'$,
    $s_i$ and $e_i$ $(i\in\{N-1,N\})$ obey Gaussian distributions which we denote as $\chi_s'$ and $\chi_e'$, respectively.
    In definition (c), the colluded clients know only the honest clients' ciphertexts $\boldsymbol{c}_{N-1}\leftarrow \texttt{MR.REnc}(a,s_{N-1}, m_{N-1}) = \left[a\cdot s_{N-1}+\tilde{p}\cdot e_{N-1}+m_{N-1}\right]_q$ and
    $\boldsymbol{c}_N\leftarrow \texttt{MR.REnc}(a,s_N, m_N) = \left[a\cdot s_N+\tilde{p}\cdot e_N+m_N\right]_q$ and the sum of their plaintexts $m_{N-1}+m_N = M - \sum_{k=1}^{N-2}m_k$, so $m_i\leftarrow U(\mathcal{R}_p) (i\in\{N-1,N\})$ holds because of the hardness of the RLWE assumption.
    Thus, the distributions $\mathcal{D}$ and $\mathcal{D}'$ are computationally indistinguishable.

Since the distribution $\mathcal{D}$ is independent of the given plaintext $\{m_k\}_{k\in[N-2]}\in\mathcal{R}_p^{N-2}$, we conclude that \nameHE is secure against collusion attacks of $\kappa<N-1$ clients.
\end{proof}

\subsection{Complexity Analysis}\label{ssec:compAna}
Here, we analyze the complexity of \nameScheme and four state-of-the-art privacy-preserving cross-silo FL models using different HE algorithms in \cite{chen2019efficient,ma2022privacy, Zhang2020BatchCrypt, FATE2019}, where HE algorithms in \cite{chen2019efficient,ma2022privacy} are multi-key variants of CKKS cryptosystem with additive homomorphism, and HE algorithms in \cite{Zhang2020BatchCrypt, FATE2019} are single-key additively HE algorithms.

% \subsubsection{Computational Complexity}
\textbf{Computational Complexity.}
 We denote the number of collaborative clients as $N$ and the number of the client-cloud iteration as $M$.
 For one client, the number of weight parameters in a network model is denoted as $\varrho$.
 The security parameter of encryption is denoted as $n$.

In the client-side training phase, each client mainly performs model training.
All five models have the same computational complexity for the $N$ clients, that is, $O(NM\varrho)$.

In the client-side encryption phase, each client mainly encrypts its local gradient.
In FL models \cite{Zhang2020BatchCrypt, FATE2019} and \nameScheme, their encryption of a plaintext $m$ is a ciphertext $c$.
That is, in one iteration, each client performs its encryption algorithm once.
Thus, the computational complexity of \cite{Zhang2020BatchCrypt, FATE2019} and \nameScheme is $O(NMn\varrho)$.
In \cite{chen2019efficient}, an MKHE encryption of a plaintext $m$ is a ciphertext vector $ct = (c_0,c_1,\cdots,c_N)$, where $c_0$ is the ciphertext of $m$ and $c_j$ $(j\in[N])$ is a ciphertext of a random number, and these ciphertexts are all encrypted by an RLWE-based encryption algorithm.
That is, in each iteration, each client should perform its encryption algorithm $N+1$ times.
Thus, the computational complexity of FL using MKHE in \cite{chen2019efficient} is $O((N+1)NMn\varrho)$.
In \cite{ma2022privacy}, an MKHE encryption of a plaintext $m$ is a ciphertext vector $ct = (c_0,c_1)$, where $c_0$ and $c_1$ are the ciphertexts of $m$ and a random number, respectively. %and these ciphertexts are all encrypted by an RLWE-based encryption algorithm.
Thus, the computational complexity of FL using MKHE in \cite{ma2022privacy} is $O(2NMn\varrho)$.

In the cloud-side aggregation phase, the cloud aggregates all $N$ clients' ciphertexts/ciphertext vectors as an aggregation ciphertext/ciphertext vector.
In FL models \cite{Zhang2020BatchCrypt, FATE2019} and \nameScheme, their aggregation processes are performed as $C=\sum_{i=1}^Nc_i$,
% their aggregation ciphertexts are in the form of $C=\sum_{i=1}^Nc_i$,
so their computational complexity is $O(NMn\varrho)$.
The aggregation process in \cite{chen2019efficient} is performed as $C=(\sum_{i=1}^N ct_i[0],\cdots,\sum_{i=1}^N ct_i[N])$, where $ct_i[j] (j\in[0,N])$ represents the $j+1$th element of the ciphertext vector $ct_i$.
Thus, the computational complexity of FL using MKHE in \cite{chen2019efficient} is $O((N+1)NMn\varrho)$.
The aggregation process in \cite{ma2022privacy} is performed as $C=(\sum_{i=1}^N ct_i[0],\sum_{i=1}^N ct_i[1])$.
Thus, the computational complexity of FL using MKHE in \cite{ma2022privacy} is $O(2NMn\varrho)$.

In the client-side decryption phase, each client decrypts aggregation results.
In FL models \cite{Zhang2020BatchCrypt, FATE2019} and \nameScheme, each client decrypts an aggregation ciphertext by performing its decryption algorithm once, and thus the computational complexity of a client is $O(Mn\varrho)$. The total computational complexity of decryption for $N$ clients is $O(NMn\varrho)$.
In \cite{chen2019efficient}, the client $i$ $(i\in[N])$ first partially decrypts the $(i+1)$th ciphertext of the aggregation ciphertext vector $C$ by using its own secret key, and then it merges all clients' partial decryption results as a final decrypted result.
Thus, the computational complexity of decryption for each client is $O(Mn\varrho+NMn\varrho)$ and the computational complexity of decryption for FL models using MKHE in \cite{chen2019efficient} is $O((N+1)NMn\varrho)$.
Similarly, in \cite{ma2022privacy},  the client $i$ $(i\in[N])$ first partially decrypts the second ciphertext of the aggregation ciphertext vector $C$, i.e., $\sum_{i=1}^N ct_i[1]$, by using its own secret key, and then it performs full decryption by merging all clients' partial decryption results.
Thus, the computational complexity of decryption for each client is $O(Mn\varrho+2Mn\varrho)$ and the computational complexity of decryption for FL models using MKHE in \cite{chen2019efficient,ma2022privacy} is $O(3NMn\varrho)$.

% Thus, the computational complexity of decryption for FL models using MKHE in \cite{ma2022privacy} is $O((N+1)NMn\varrho)$.

Therefore, the computational complexity of the whole FL models using MKHE in \cite{chen2019efficient} and \cite{ma2022privacy} is $O(3(N+1)NMn\varrho+NM\varrho)$ and $O(7NMn\varrho+NM\varrho)$, respectively, while the computational complexity of the whole FL models using SKHE in \cite{Zhang2020BatchCrypt, FATE2019} and the whole \nameScheme model is $O(3NMn\varrho+NM\varrho)$. %, which is equal to the computational complexity of the conventional deep learning model without privacy protection.

 \textbf{Communication Complexity.}
 We denote the space cost of the network model as $\varpi$.
 In the client-side training phase, all $N$ clients send their encrypted gradients to the cloud.
 In \cite{Zhang2020BatchCrypt, FATE2019} and \nameScheme, their ciphertext numbers are all 1, while in \cite{chen2019efficient} and \cite{ma2022privacy}, their ciphertext numbers are $N+1$ and $2$, respectively.
 Thus, for FL models using SKHE in \cite{Zhang2020BatchCrypt, FATE2019} and \nameScheme, the communication complexity from the $N$ clients to the cloud server is $O(NMn\varpi)$, and for FL models using MKHE in \cite{chen2019efficient} and \cite{ma2022privacy}, the communication complexity from the $N$ clients to the cloud server is $O((N+1)NMn\varpi)$ and $O(2NMn\varpi)$, respectively. %

 In the cloud-side aggregation phase, the cloud aggregates all clients' ciphertexts/ciphertext vectors and transmits the aggregation result to all $N$ clients.
 The numbers of ciphertext of an aggregation result in \cite{Zhang2020BatchCrypt, FATE2019} and \nameScheme are all 1, while that in \cite{chen2019efficient} and \cite{ma2022privacy} are $N+1$ and 2, respectively.
 Thus, for FL models using SKHE in \cite{Zhang2020BatchCrypt, FATE2019} and \nameScheme, the communication complexity from the cloud server to the $N$ clients is $O(NMn\varpi)$, and for FL models using MKHE in \cite{chen2019efficient} and \cite{ma2022privacy}, the communication complexity from the cloud server to the $N$ clients is $O((N+1)NMn\varpi)$ and $O(2NMn\varpi)$, respectively.

 In the client-side decryption phase, for FL models using SKHE in \cite{Zhang2020BatchCrypt, FATE2019} and \nameScheme, each client decrypts aggregation results locally, so their communication complexities of decryption are all 0.
 For FL models using MKHE in \cite{chen2019efficient} and \cite{ma2022privacy}, each client first partially decrypts aggregation results and then sends its partial decryption results to all other $N-1$ clients for final decryption, so their communication complexities of decryption are $O((N-1)NMn\varpi)$.

 Therefore, the whole communication complexity of FL models using SKHE in \cite{Zhang2020BatchCrypt, FATE2019} and \nameScheme is $O(NMn\varpi)$, and the whole communication complexity of FL models using MKHE in \cite{chen2019efficient} and \cite{ma2022privacy} is $O(N^2Mn\varpi)$.

From the above complexity analyses, we know that \nameScheme has lower computation and communication complexity compared with FL models using MKHEs.
Besides, the computation and communication complexity of \nameScheme is the same as that of SKHE-based FL models, whose computation and communication complexity is linearly related to the number of clients, the number of iterations, the security parameter of encryption, and the network architecture.

\begin{figure*}[ht]
\hspace{-0.45cm}
	\subfloat[FCN test accuracy with time]{
		\begin{minipage}[b]{.33\textwidth}
			\centering
			\includegraphics[width=2.4in]{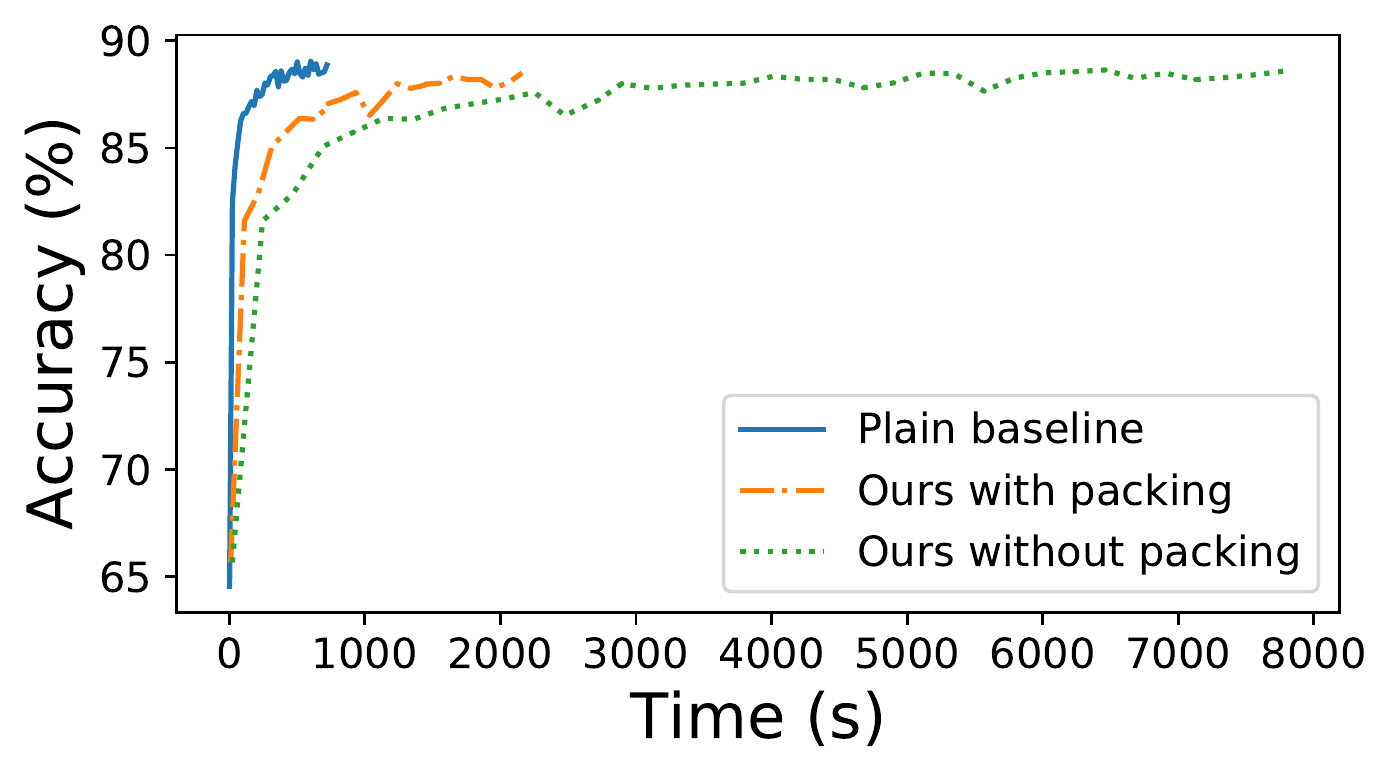}
		\end{minipage}
	}
	\subfloat[AlexNet test accuracy with time]{
		\begin{minipage}[b]{.33\textwidth}
			\centering
			\includegraphics[width=2.4in]{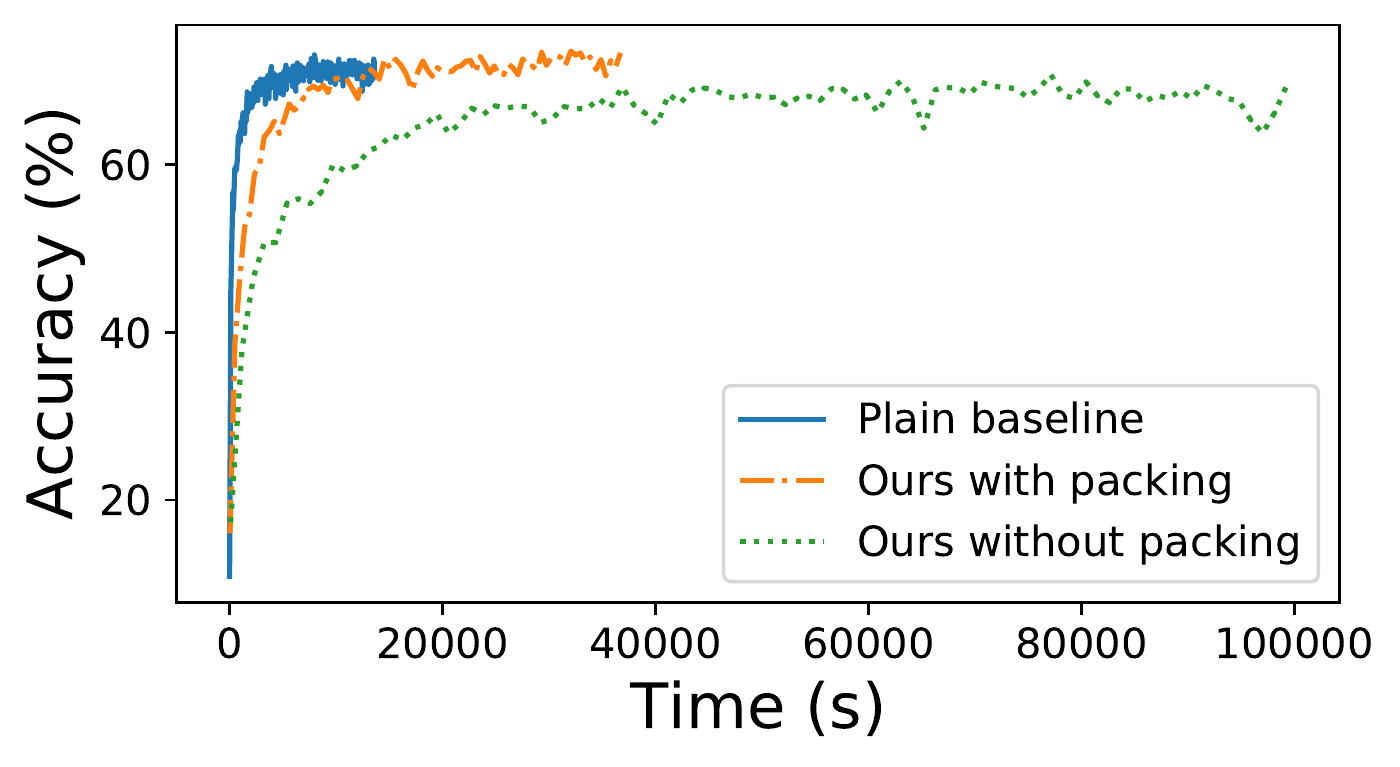}
		\end{minipage}
	}
	\subfloat[LSTM train loss with time]{
		\begin{minipage}[b]{.33\textwidth}
			\centering
			\includegraphics[width=2.6in]{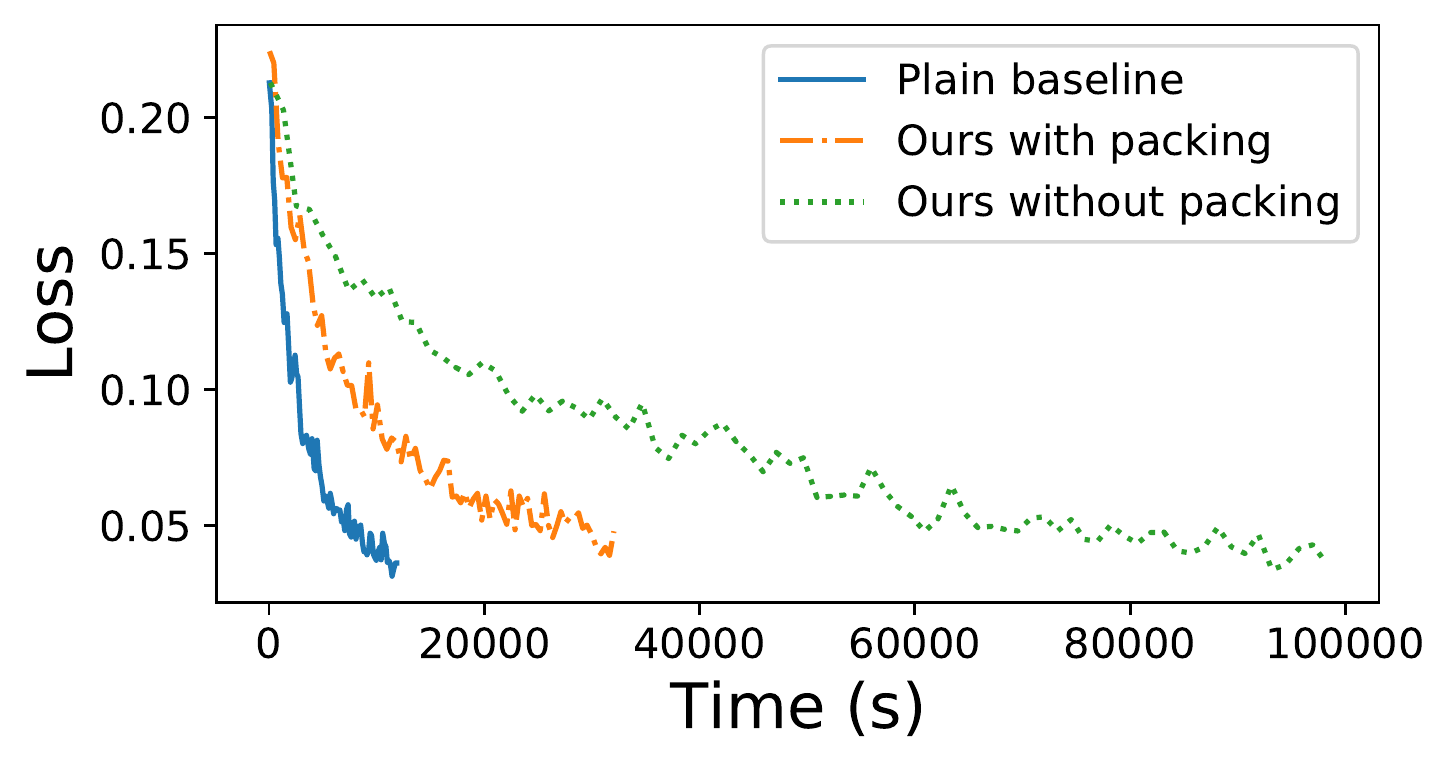}%{LSTM_loss_runtime.pdf}
		\end{minipage}
	}
\\
	\subfloat[FCN test accuracy with epoch]{
	    \hspace{-0.45cm}
		\begin{minipage}[b]{.33\textwidth}
			%\centering
			\includegraphics[width=2.4in]{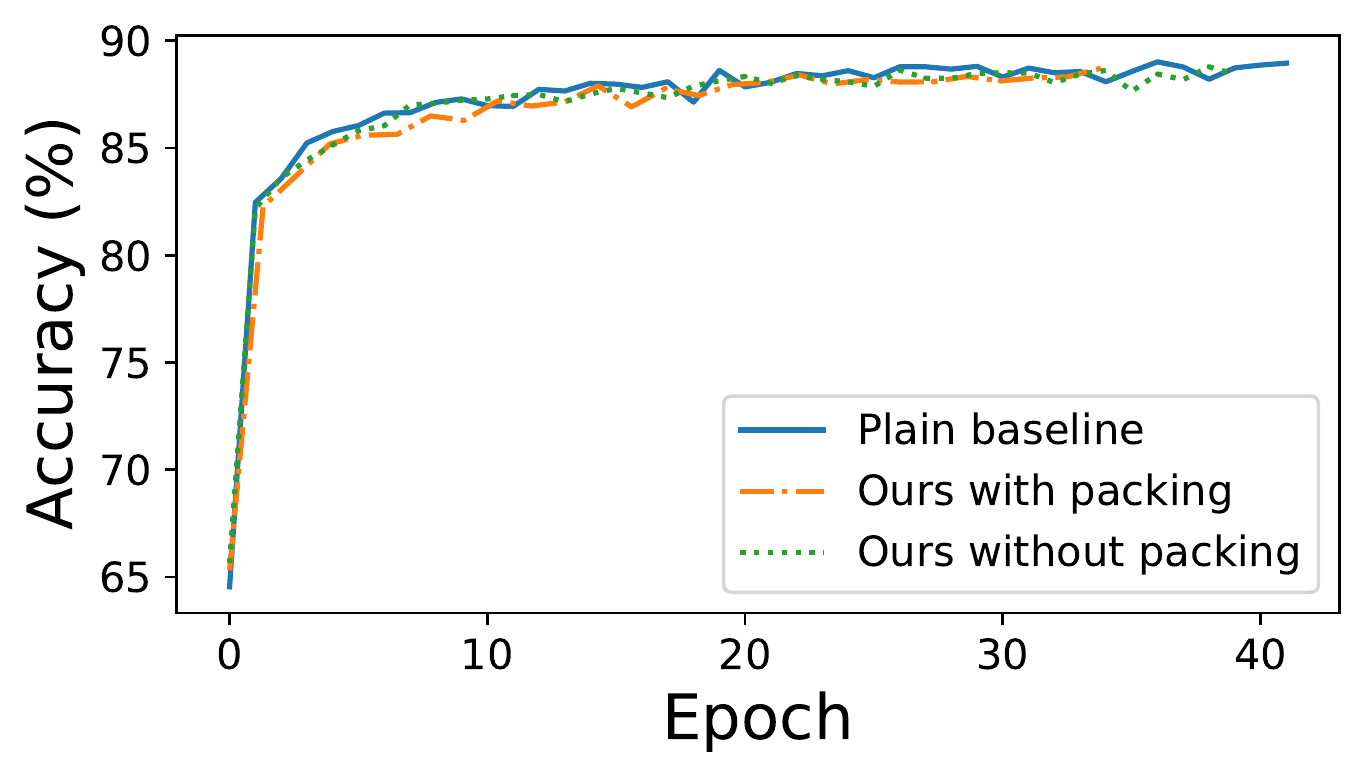}
		\end{minipage}
	}
	\subfloat[AlexNet test accuracy with epoch]{
		\begin{minipage}[b]{.33\textwidth}
			%\centering
			\includegraphics[width=2.4in]{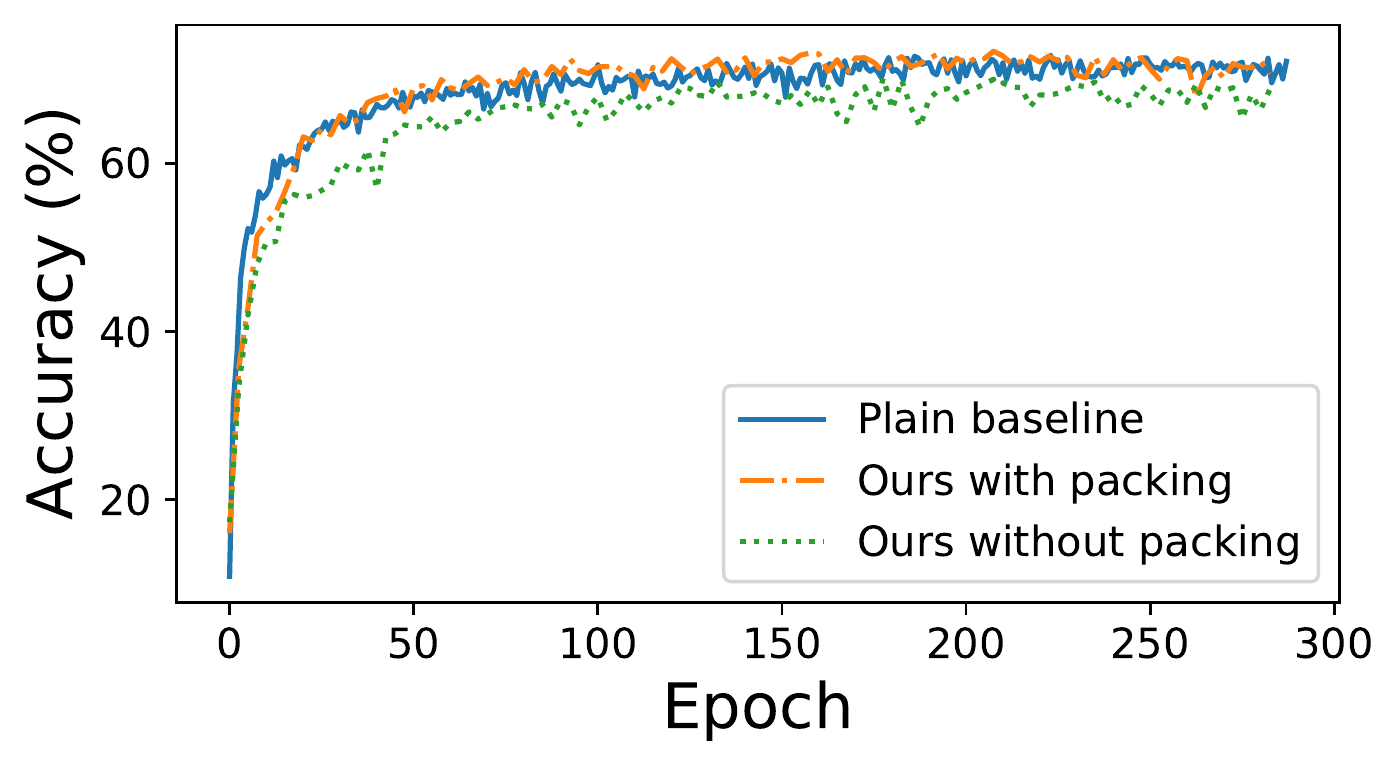}
		\end{minipage}
	}
	\subfloat[LSTM train loss with epoch]{
		\begin{minipage}[b]{.33\textwidth}
			%\centering
			\includegraphics[width=2.5in]{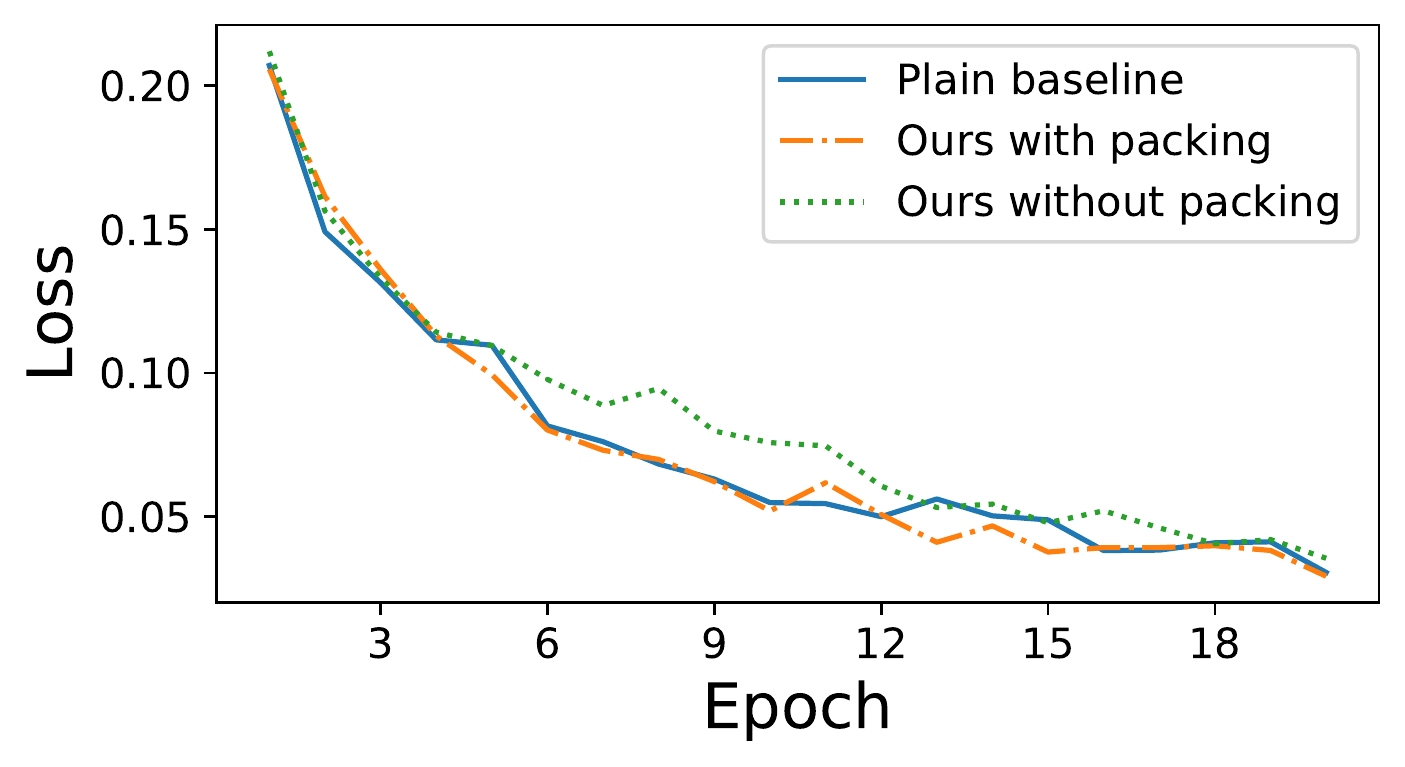}
		\end{minipage}
	}
    \setcounter{figure}{\value{fig}}	
	\caption{The quality and the training speed of plain baseline and our \nameScheme with and without packing}
	\label{fig:unpkVSpack}
\end{figure*}

\section{Performance Evaluation}\label{sec:experiments}

In this section, we evaluate the performance of \nameScheme via experiments.
We first examine the effectiveness of \nameScheme's packing (Section~\ref{ssec:impactPack}).
We then assess the computation and communication benefits \nameScheme brings compared to FL models using SKHE in \cite{Zhang2020BatchCrypt, FATE2019} and how \nameScheme performs compared to the baseline federated learning on plaintext (Section~\ref{ssec:efficiency}).
Finally, we demonstrate the total training time and communication traffic that FL models take until they converge.

\subsection{Experimental Setup}

\subsubsection{Implementation}\label{sssec:implemetation}

The experiments are done on a computer with 
Intel Core i7-6700 CPU.
The \nameHE's encryption/decryption, encoding/decoding, and packing/unpacking algorithms are coded with C++ language and employ NTL 10.4.0 \cite{ntl} and GMP 6.2.1 libraries \cite{gmp} for implementing arbitrary length integers and arbitrary precision arithmetic.
The federated learning models are performed with PyTorch 1.11.0 framework over Python 3.8. %and cuda 10.2.89.
To implement \nameScheme, 
we generate a dynamic library of all the \nameHE codes and call it in a Python script. 

\subsubsection{Benchmark Models and \nameHE Parameters}

We denote FL models of \cite{Zhang2020BatchCrypt} and \cite{FATE2019} as BatchCrypt and FATE, respectively.
For the convenience of comparison, we implement three ML applications in both \nameCompOne and \nameCompTwo.
In the first application, a 3-layer fully-connected neural network (FCN) with 101.77K weights is trained over FMNIST dataset, with batch size 128 and Adam optimizer.
The second application is training an AlexNet model with 1.25M weights using CIFAR10 dataset, where training batch size is 128 and RMSprop optimizer is adopted with $10^{-6}$ decay.
The third application is an LSTM model with 4.02M weights trained over Shakespeare dataset, with batch size 64 and Adam optimizer.
There are nine clients in \nameScheme. For each application, we randomly shuffle its dataset and assign it to the nine clients.
The clients' datasets in FCN and AlexNet are IID (identically and independently distributed) while those in LSTM are non-IID.
We set $\alpha_1=\cdots=\alpha_9=1$.

\begin{table}%[!h]
\centering
\caption{Network bandwidth between the server and the clients in different regions.}
\label{tab:bandwidth}
\begin{tabular}{|l|l|l|l|l|l|}
\hline
Region & HK &LDN  & TYO. & N. VA.  & Ore. \\\hline
Uplink (Mbps) &81 &97  &116  &165  & 9841\\\hline
Downlink (Mbps) &84 &84 &122 &151 & 9841 \\\hline
\end{tabular}
\end{table}

We consider the same geo-distributed FL scenario as \nameCompOne, where the clients train an ML model in five AWS EC2 datacenters located in Hong Kong, London, Tokyo, N. Virginia, and Oregon, and the aggregator is run in the Oregon datacenter.
The network bandwidth between the aggregator and the clients is presented in Table~\ref{tab:bandwidth}.
We employ synchronous stochastic gradient descent method to train all the ML models.

%\subsubsection{Parameter Setting}
The precision of all three models is $\log q_0=16$ unless otherwise specified.
We choose parameters of \nameHE according to Albrecht's LWE estimator \cite {Albrecht2021} which estimates the bit security of the RLWE-based encryption algorithm on known attacks.
Specifically, in our experiments, we set $\log q=478, \log p=460, \log n=15$.
We choose $\chi_s$ as ternary distribution (i.e., each coefficient is sampled uniformly from
$\{-1,0,1\}$) with parameter 64 and choose $\chi_e$ as Gaussian distribution with mean 0 and variance 1.22.
According to the standard of HE in \cite{Albrecht2021}, the security level of our model is $\lambda=256$.

\subsection{Effectiveness of \nameScheme's Packing}\label{ssec:impactPack}

 \begin{figure*}[t]
    \hspace{-0.45cm}
	\subfloat[FCN client side]
	{
		\begin{minipage}[b]{.33\textwidth}
			\centering
			\includegraphics[width=2.45in]{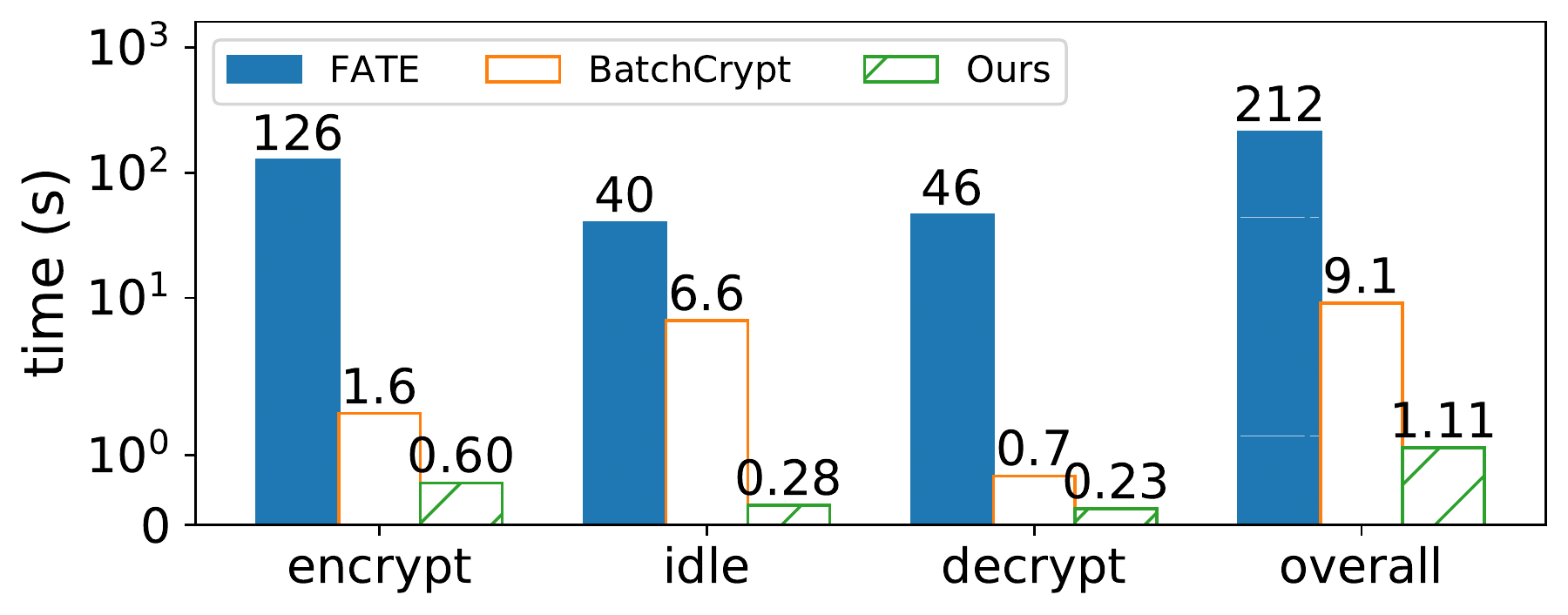}
		\end{minipage}
	}
	\subfloat[AlexNet client side]
	{
	\begin{minipage}[b]{.33\textwidth}
		\centering
		\includegraphics[width=2.4in]{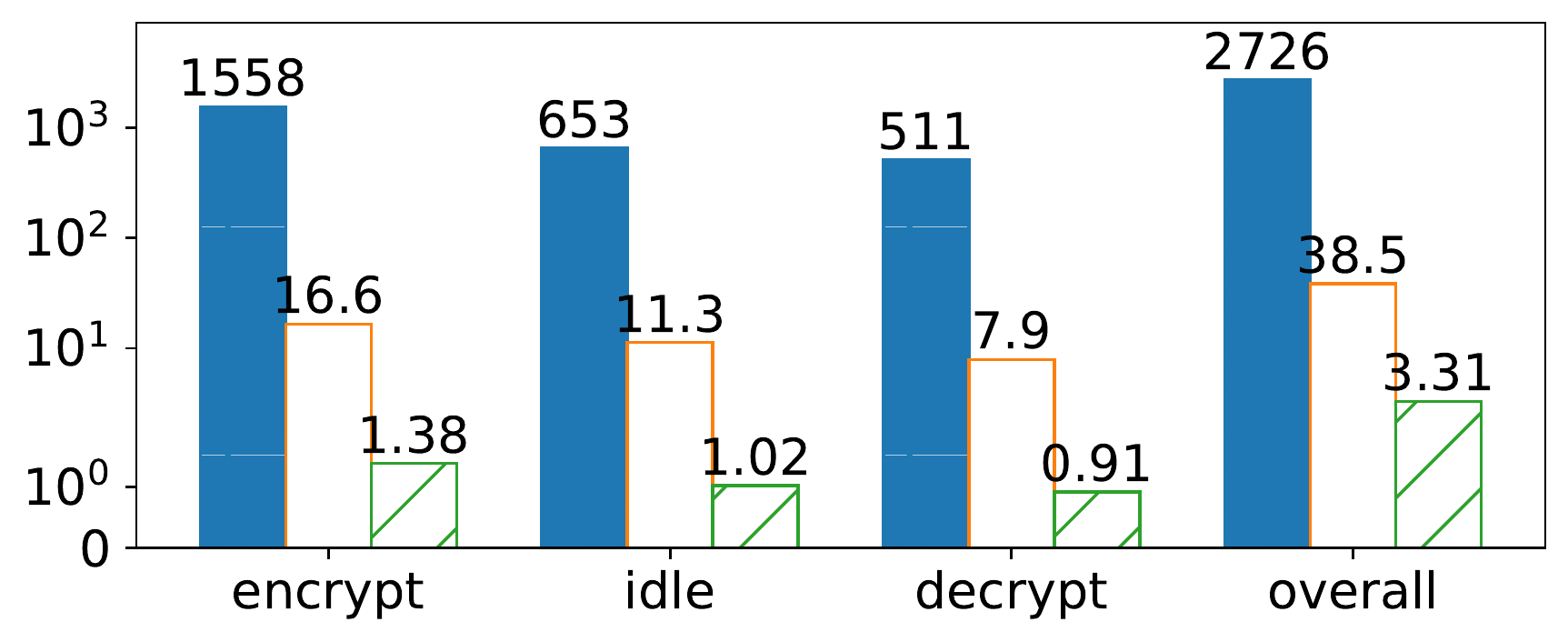}
	\end{minipage}
	}
	\subfloat[LSTM client side]
	{
	\begin{minipage}[b]{.33\textwidth}
		\centering
		\includegraphics[width=2.4in]{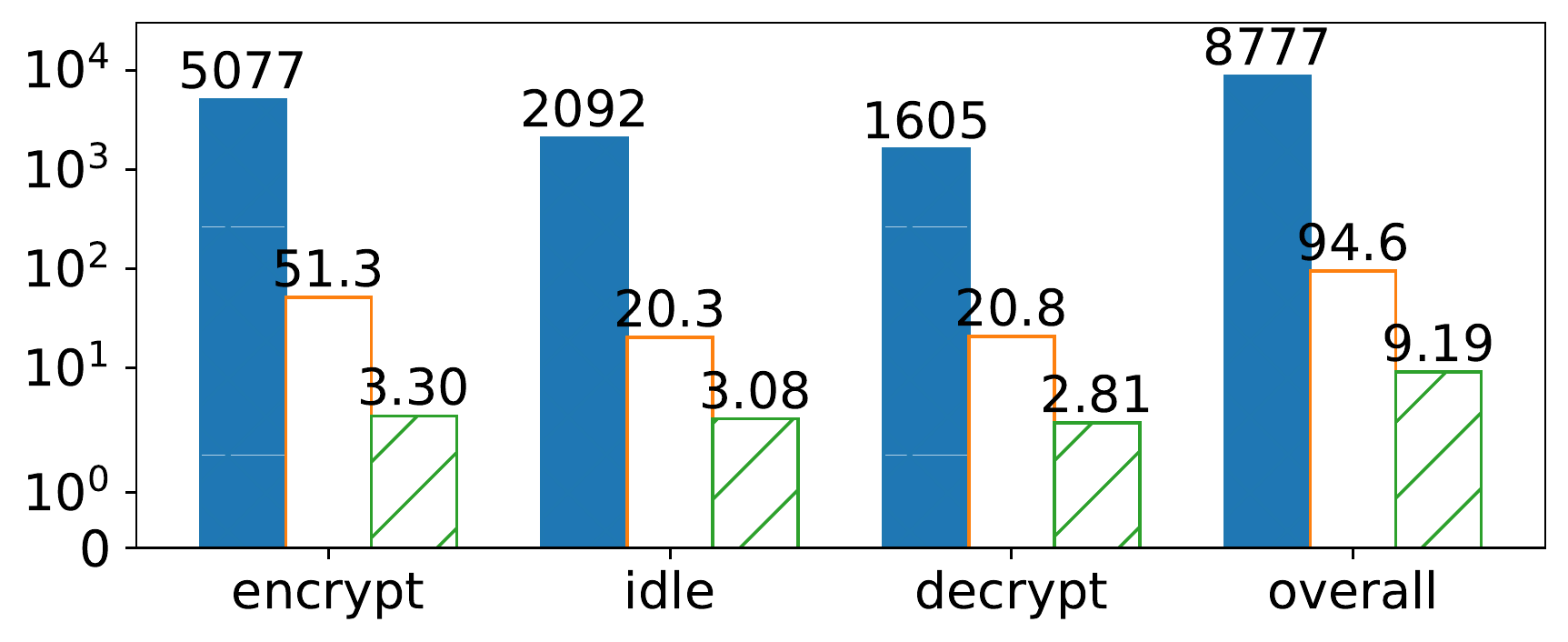}
	\end{minipage}
	}
  \\
	\subfloat[FCN server side]
	{
		    \hspace{-0.45cm}
		\begin{minipage}[b]{.33\textwidth}
			\centering
			\includegraphics[width=2.45in]{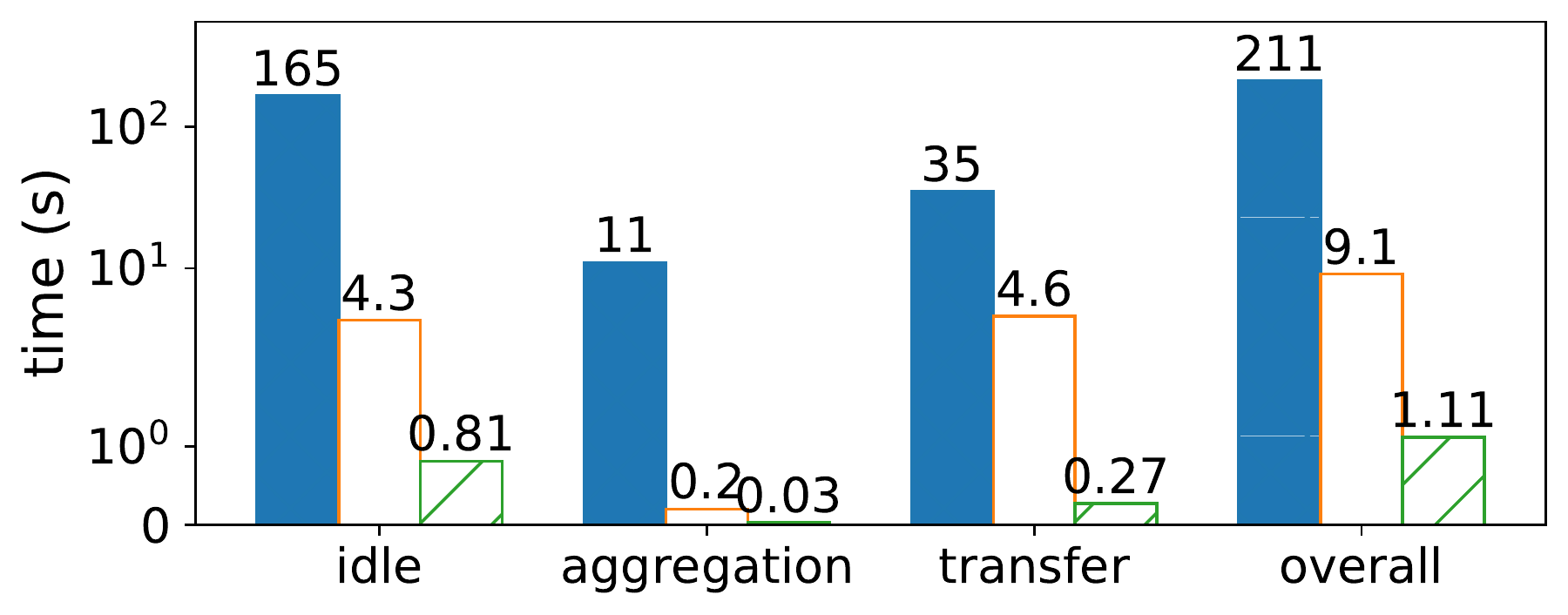}
		\end{minipage}
	}
	\subfloat[AlexNet server side]
	{
		\begin{minipage}[b]{.33\textwidth}
			\centering
			\includegraphics[width=2.4in]{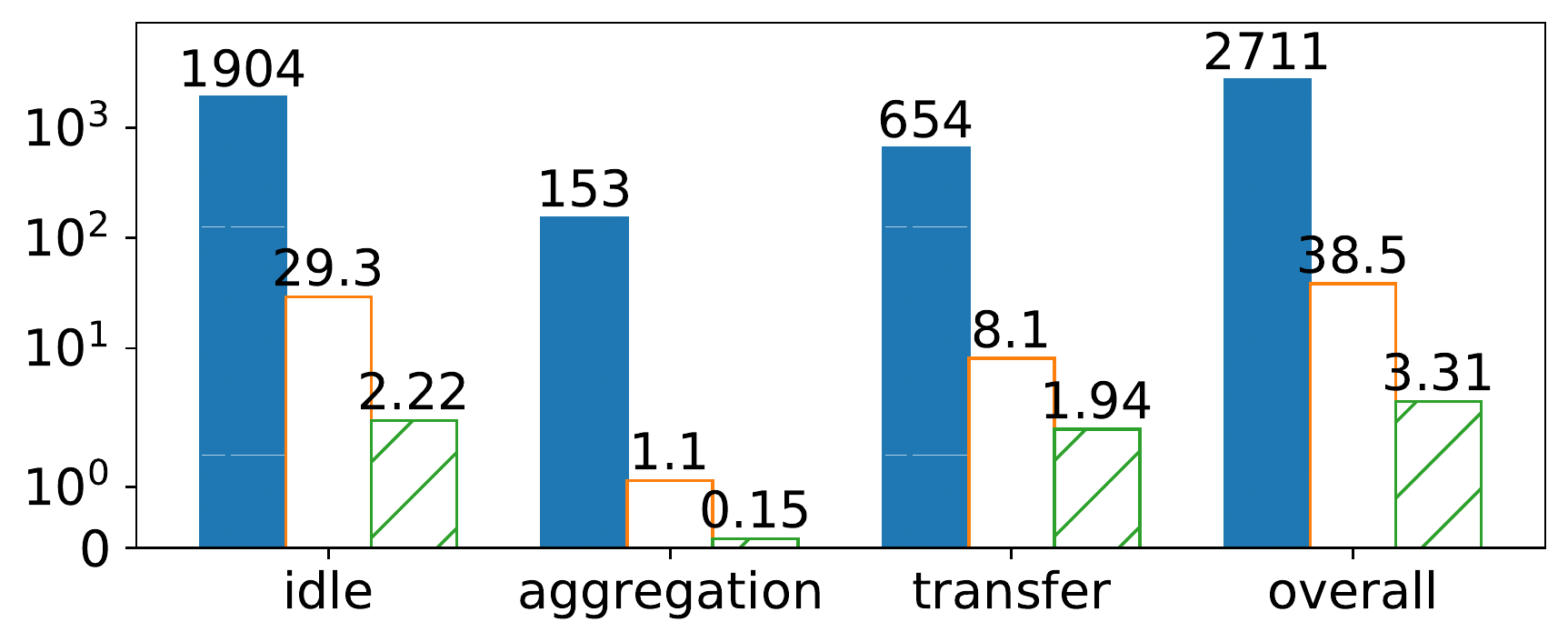}
		\end{minipage}
	}
	\subfloat[LSTM server side]
	{
		\begin{minipage}[b]{.33\textwidth}
			\centering
			\includegraphics[width=2.4in]{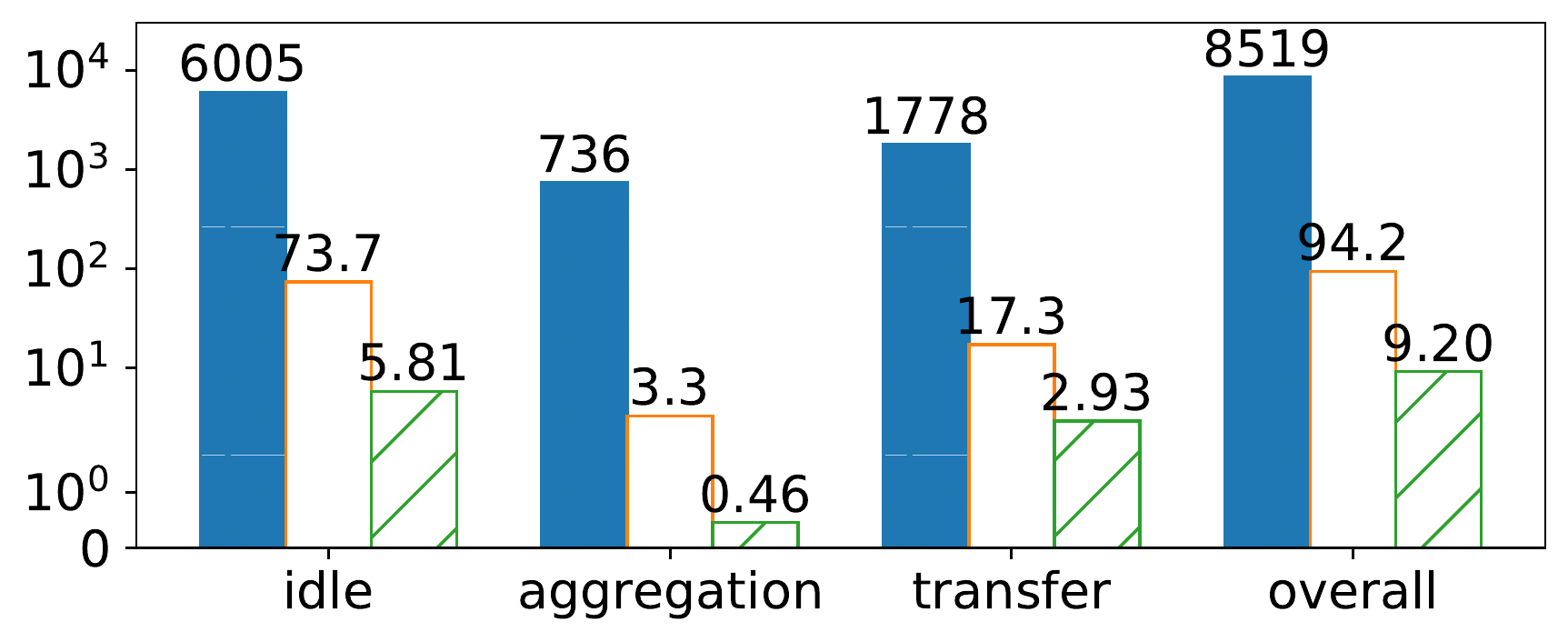}
		\end{minipage}
	}
	\caption{Comparisons of training iteration time among FATE, BatchCrypt, and \nameScheme, where ``idle" represents the idle waiting time of participates (the clients or the server), and ``encrypt" in \nameScheme measures the time of the pseudo-random generation, encoding, packing, and encryption on the client. All models are trained for 50 iterations and the run time is the average result.}
	\label{fig:runtime}
\end{figure*}

The effectiveness of \nameScheme's packing lies in two folds.

(1) It reduces the computation cost and communication traffic caused by encryption.
To compare computation cost, we train \nameScheme with and without packing and train the plain baseline until they converge. %
Fig.~\ref{fig:unpkVSpack} (a), (b), and (c) depict the training speed of the three models, and Fig.~\ref{fig:unpkVSpack} (d), (e), and (f) present the model quality of the three trained models at different iterations.
These figures show that \nameScheme's packing mitigates the training speed delay caused by encryption (\nameScheme's encryption without packing) by more than $3\times$, and \nameScheme with packing only slows down the training time by less than $2\times$ for the FCN, AlexNet, and LSTM models compared with their plain baseline models which require no encryption and thus are the fastest to converge.
Table~\ref{tab:compPlainCipher} presents the number of messages and the communication traffic of the plain FL (plain), \nameScheme without packing (unpk), and \nameScheme with packing (pack).
It can be seen from the table that \nameScheme's packing method can pack about 20 messages into one ciphertext, and \nameScheme with packing significantly decreases the number of ciphertexts and the communication traffic caused by encryption (cipher traffic without packing).

\begin{table}%[!h]
	\centering
	\caption{Comparisons of communication traffic for one iteration on entities between \nameScheme with and without packing.}
	%The quantification bit width of weights is 16 and the security parameter of encryption is 256.}
	\label{tab:compPlainCipher}
	\begin{tabular}{|l|l|l|l|}
		\hline
		& FCN     & AlexNet   & LSTM     \\\hline
		\#Gradients (plain) & 101.77K & 1.25M     & 4.02M    \\\hline
		\#Ciphertexts (unpk) & 7 & 77 & 246    \\\hline
		\#Ciphertexts (pack) & 1 & 4  & 12    \\\hline
        Plain traffic (plain) & 0.78MB  & 9.54MB  & 30.67MB \\\hline
		Cipher traffic (unpk) & 26.15MB & 287.60MB & 918.81MB \\\hline
		Cipher traffic (pack) & 1.87MB  & 14.94MB  & 44.82MB \\\hline		
	\end{tabular}
\end{table}

(2) It reduces the impact of \nameHE's encoding on model accuracy/loss.
Table~\ref{tab:decAcc} presents the magnitude of error of the three benchmark models after encryption with and without polynomial packing. %with a quantization bit width of 32.
An error of a model is calculated by averaging the differences between the decrypted aggregation gradients and the aggregated original gradients.
We can see from the table that when packing is used, the errors of the three models are all close to 0. On the contrary, if packing is not used, their errors are quite large.

The model quality of the plain baseline and \nameScheme with packing are nearly the same.
Fig.~\ref{fig:unpkVSpack} depicts the experimental results of test accuracy with and without packing.
For FCN model trained with FMNIST dataset, plain baseline reaches peak accuracy $89.15\%$ at the 36th epoch, while \nameScheme without and with packing reach $89.01\%$ and $89.05\%$ at the 36th epoch and 34th epoch, respectively.
For AlexNet model trained CIFAR10 dataset, plain baseline reaches peak accuracy $73.35\%$ at the 265th epoch, while \nameScheme  without and with packing reach $70.14\%$ and $73.85\%$ at the 265th epoch and the 255th epoch, respectively.
Finally, for LSTM, plain baseline reaches bottom loss $0.0336$ at the 20th epoch, while \nameScheme without and with packing reach $0.0358$ and $0.0337$ at the 21st epoch and the 20th epoch, respectively.
Therefore, with the packing technique, the encoding of \nameHE brings negligible negative impact on the final model quality of \nameScheme.

\begin{table}[!h]
	\centering
	\caption{The magnitude of difference between the decrypted messages and the original messages for FCN, AlexNet, and LSTM models a quantification bit width of 32.	}
	\label{tab:decAcc}
	\begin{tabular}{|l|l|l|l|}
		\hline
		     &FCN        & AlexNet   & LSTM      \\ \hline
		unpk & $10^{-2}$ & $10^{-2}$ & $10^{-2}$ \\ \hline
		pack & $10^{-9}$ & $10^{-9}$ & $10^{-9}$ \\ \hline
	\end{tabular}
\end{table}

\subsection{Effectiveness of \nameScheme}\label{ssec:efficiency}

\textbf{\nameScheme compared with BatchCrypt and FATE.}
We next compare the effectiveness of \nameScheme with BatchCrypt and FATE in terms of running time and network traffic.
The quantization bit width is set to 16, which is the same as BatchCrypt and FATE in \cite{Zhang2020BatchCrypt}.
We run the experiments of \nameScheme for 50 iterations and show their averaged result against the same measurements of BatchCrypt and FATE in Figs.~\ref{fig:runtime} and \ref{fig:traffic}.
From Fig.~\ref{fig:runtime}, we can see that compared with BatchCrypt and FATE, \nameScheme improves the computational efficiency of encryption by about one order of magnitude and two orders of magnitude, respectively.
The ``idle" time for transmitting ciphertexts in \nameScheme is also reduced because of the reduction of the network traffic.
Fig.~\ref{fig:traffic} compares the specific network traffic of one training iteration among \nameScheme, BatchCrypt, and FATE.
We see that \nameScheme reduces the communication traffic of FATE by up to $68\times, 103\times$, and $112\times$ for FCN, AlexNet, and LSTM, respectively, and reduces that of BatchCrypt by up to 1MB, 60MB, and 50MB for FCN, AlexNet, and LSTM, respectively.

\begin{figure}[htbp]
\hspace{-0.45cm}
\subfloat[FCN]{
 \begin{minipage}[b]{.17\textwidth}
  \centering
  \includegraphics[width=1.1in]{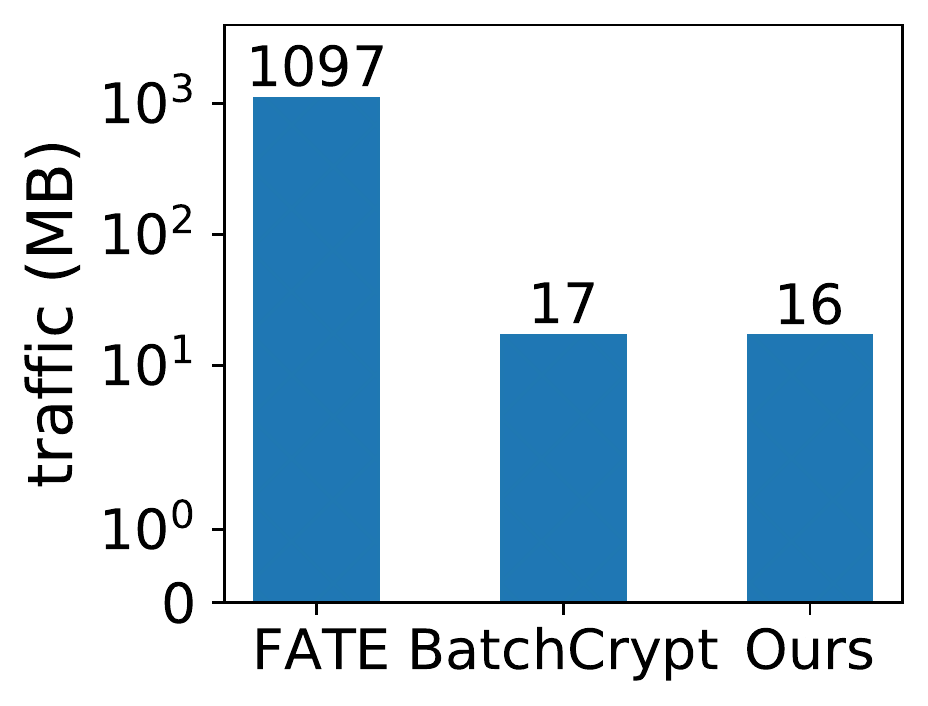}
 \end{minipage}
 }
\subfloat[AlexNet]{
  \hspace{-0.45cm}
 \begin{minipage}[b]{.17\textwidth}
  \centering
  %\hspace{-0.45cm}
  \includegraphics[width=1.1in]{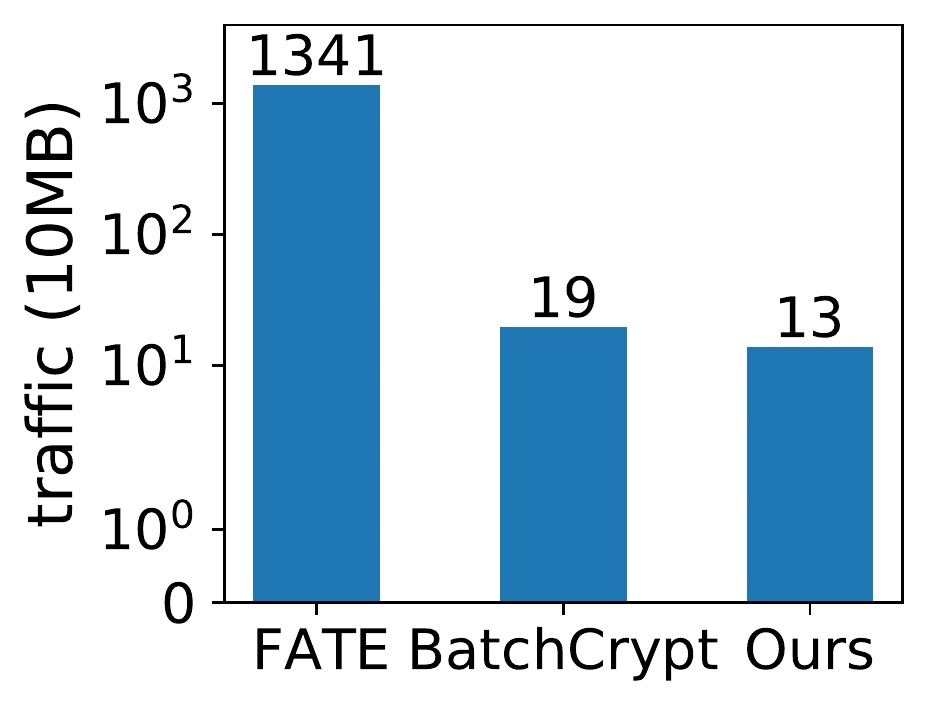}
 \end{minipage}
 }
\subfloat[LSTM]{
  \hspace{-0.45cm}
 \begin{minipage}[b]{.17\textwidth}
  \centering
  \includegraphics[width=1.1in]{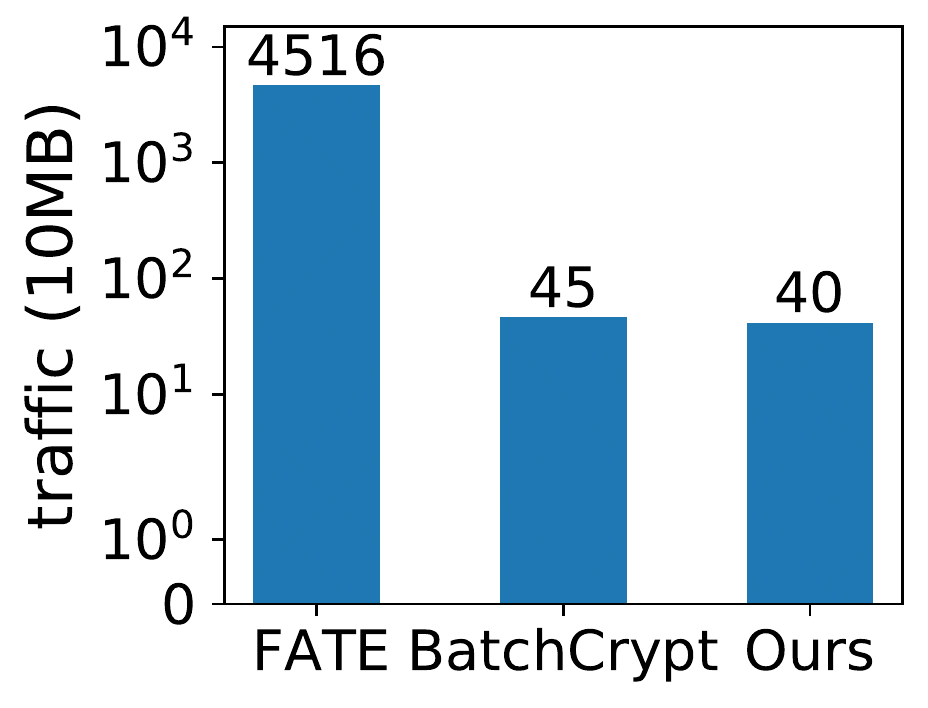}
 \end{minipage}
 }
\caption{Network traffic comparisons among FATE, BatchCrypt, and \nameScheme in one training iteration.}
 \label{fig:traffic}
\end{figure}

\textbf{\nameScheme versus Plaintext Learning.}
Next, we compare \nameScheme with the plain FL that involves no encryption and thus is an ideal baseline offering the optimal performance.
We present the training time and communication traffic for one iteration in Fig.~\ref{fig:totalTimeTraffic}.
From the figure we know that while encryption remains the major performance bottleneck, \nameScheme successfully reduces both the computation and communication burdens, making it practical to obtain the same trained model as the plain FL.

\begin{figure}[htbp]
\hspace{-0.55cm}
\subfloat[Time]{
	\begin{minipage}[b]{.25\textwidth}
		\centering
		\includegraphics[width=1.75in]{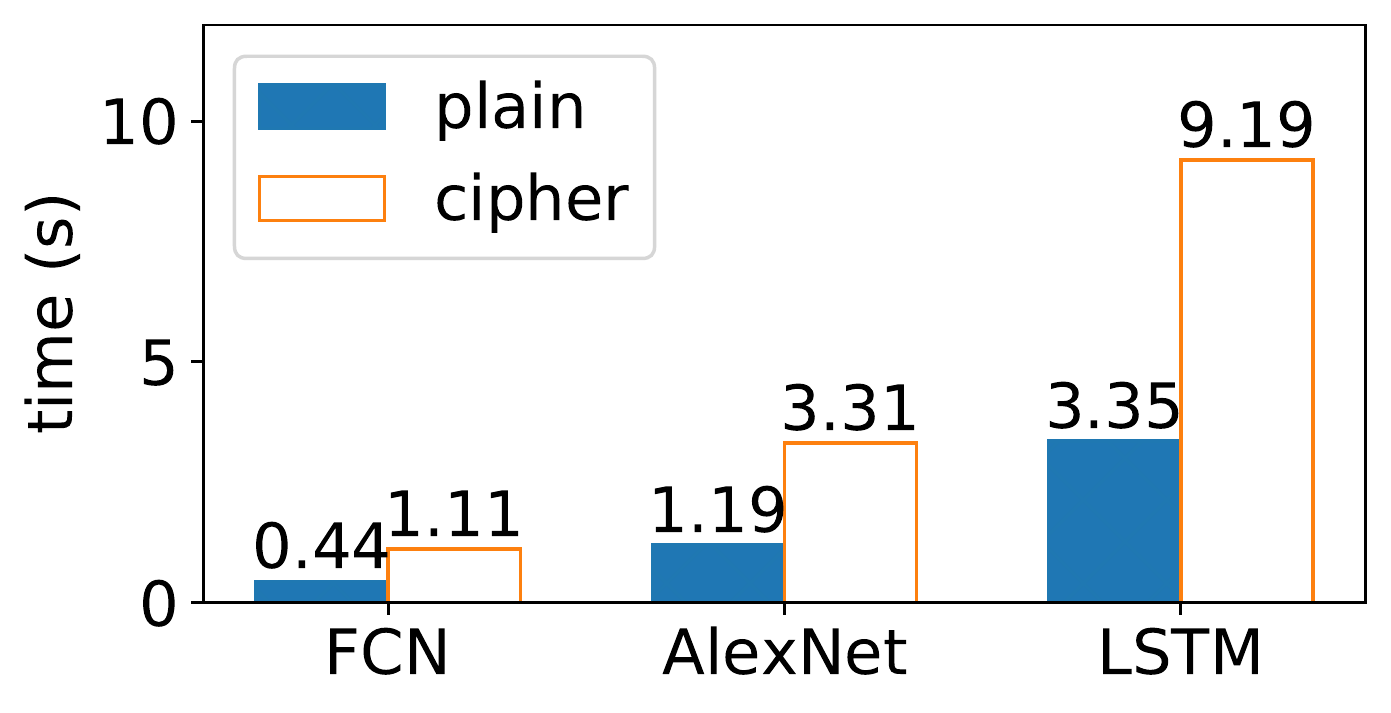}
	\end{minipage}
}
\subfloat[Communication]{
\begin{minipage}[b]{.25\textwidth}
	\centering
	\includegraphics[width=1.85in]{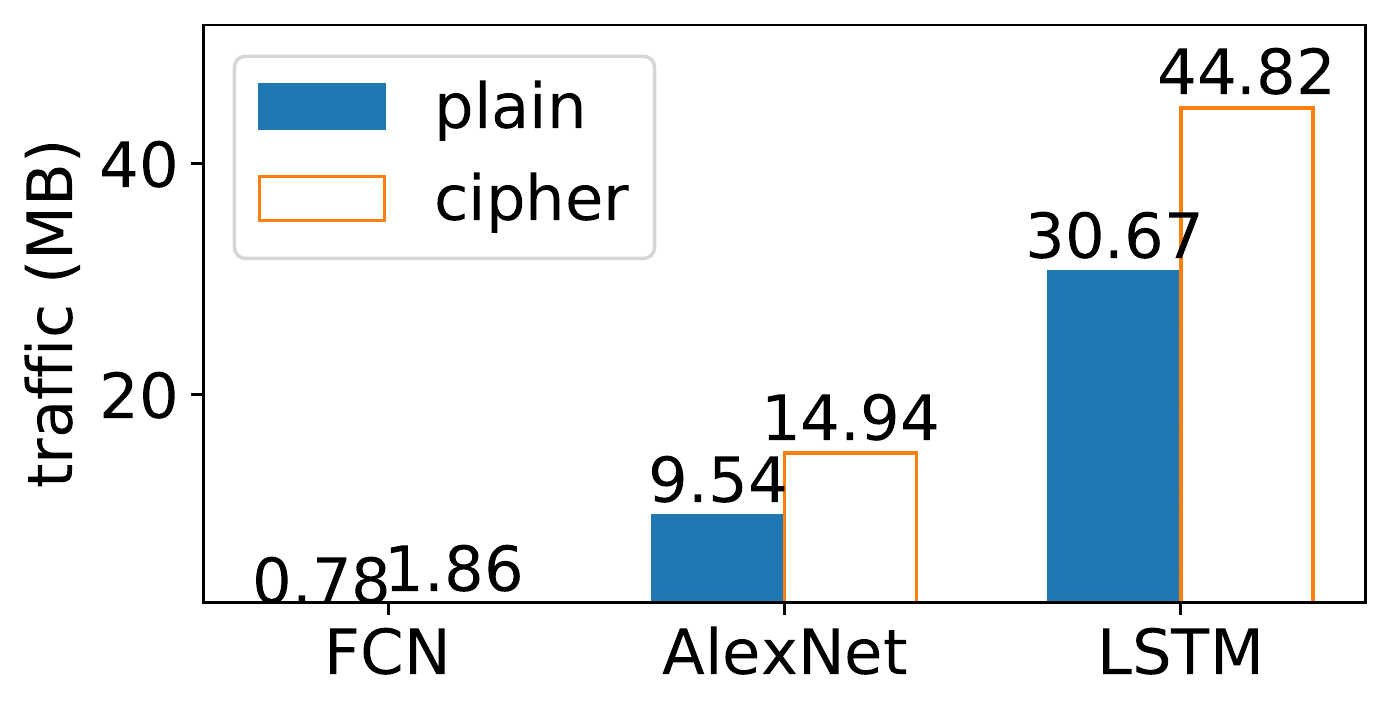}
\end{minipage}
}
 \caption{Comparisons of time and communication costs on clients between \nameScheme and plain FL for one iteration. }
 %Time and communication comparisons on clients for one iteration between \nameScheme and plain FL.
 %The quantization bit width is set to 16 and 32.}
  \label{fig:totalTimeTraffic}
\end{figure}

\begin{table}[ht]
\scriptsize
%\footnotesize
\centering
\caption{Comparisons of total training time and network traffic for converged FCN, AlexNet, and LSTM models under different training modes.}% All models' quantification bit width is 16.}
\label{tab:comp_effcy}
\begin{tabular}{|l|l|l|l|l|l|}
\hline
Model                    & Mode       & Epochs & Acc./loss & Time(h) & Traffic(GB) \\ \hline
\multirow{4}{*}{FCN}     & FATE       & 40     & 88.62\%   & 122.5   & 2228.3      \\ \cline{2-6}
                         & BathCrypt  & 68     & 88.37\%   & 8.9     & 58.7        \\ \cline{2-6}
                         & \nameScheme& 34     & 88.79\%   & 0.6    & 28.9        \\ \cline{2-6}
                         & Plain      & 36     & 89.15\%   & 0.2    & 12.8       \\ \hline
\multirow{4}{*}{AlexNet} & FATE       & 285    & 73.97\%   & 9495.6  & 16422.0     \\ \cline{2-6}
                         & BathCrypt  & 279    & 74.04\%   & 131.3   & 277.8       \\ \cline{2-6}
                        & \nameScheme & 255    & 73.85\%   & 10.2   & 180.6       \\ \cline{2-6}
                        & Plain       & 265    & 73.35\%   & 3.8     & 119.8       \\ \hline
\multirow{4}{*}{LSTM}    & FATE       & 20     & 0.0357    & 8484.4  & 15347.3     \\ \cline{2-6}
                        & BathCrypt   & 23     & 0.0335    & 105.2   & 175.9       \\ \cline{2-6}
                        & \nameScheme & 20     & 0.0351    & 8.9     & 139.9       \\ \cline{2-6}
                        & Plain       & 20     & 0.0357    & 3.3     & 100.4        \\ \hline
\end{tabular}
\end{table}

\textbf{Training to Convergence.}
We next compare the total training time and network traffic of FATE, BatchCrypt, and \nameScheme until they are trained to convergence.
Because the whole training process requires exceedingly high costs in both time and traffic, we simulate distributed training of FL locally until the model converges, and estimate the total time and communication cost needed for convergence according to the network bandwidth in Table~\ref{tab:bandwidth} and the number of iterations in the local training.
Table~\ref{tab:comp_effcy} presents the estimation results of FATE, BatchCrypt, \nameScheme, and plain FL.
Compared with FATE and BatchCrypt, \nameScheme lowers both training time and network traffic required for convergence.
Specifically, \nameScheme reduces the training time of FATE by $204\times,930\times$, and $953\times$ for FCN, AlexNet, and LSTM, respectively, and reduces that of BatchCrypt by $14\times, 12\times$, and $11\times$, respectively.
In the meantime, \nameScheme shrinks the network traffic of FATE by $77\times, 90\times,109\times$ and $2\times,1.5\times,1.25\times$, respectively.
Besides, compared with plain FL, \nameScheme slows down the training time towards convergence by $2\times, 1.68\times$, and $1.69\times$ for FCN, AlexNet, and LSTM, respectively.
It expands the communication traffic of plain FL by less than $1.5\times (1.26\times,0.5\times, 0.39\times)$ for the three models.
Therefore, \nameScheme significantly reduces both the computation and communication burden caused by encryption, enabling efficient secure multiparty computation for cross-silo FL.

%\section{Discussion}\label{sec:discussion}

\section{Conclusion and Future Work}\label{sec:conclusion}

In this paper, we propose \nameScheme, an efficient secure cross-silo federated learning model.
To protect the confidentiality of the clients' datasets and ensure the security of the training results under honest-but-curious entities, we develop an efficient and secure additively homomorphic encryption algorithm \nameHE based on RLWE hard problem.
\nameHE can provide privacy protection without the need for secure channel transmission of ciphertext.
To improve the computational efficiency and reduce the communication cost of the existing MKHE algorithms, we endow \nameHE with a one-step decryption process and a polynomial packing method. % present a polynomial packing method and design a one-round decryption process, respectively.
Besides, FFT is also used to accelerate polynomial multiplications.
Our theoretical analyses and experimental evaluations present that \nameScheme model can achieve security and efficiency without reducing the model accuracy.

In our future work, we plan to extend our \nameHE to fully homomorphic encryption and plug it into cross-silo FL to achieve secure nonlinear aggregation with which FL can be more robust to heterogeneous clients with different dataset quantities, dataset distribution, computing power, communication bandwidth, etc.

\bibliographystyle{IEEEtran}
\bibliography{IEEEabrv,cited}
%\bibliography{IEEEabrv,output}

%\input{sections/authors}

% that's all folks
\end{document}